\def\eqref#1{equation~\ref{#1}}
\def\1{\bm{1}}
\DeclareMathAlphabet{\mathsfit}{\encodingdefault}{\sfdefault}{m}{sl}
\SetMathAlphabet{\mathsfit}{bold}{\encodingdefault}{\sfdefault}{bx}{n}
\newtheorem{theorem}{Theorem}
\newtheorem{lemma}{Lemma}
\newtheorem{definition}{Definition}
\title{Attention-Guided Contrastive Role Representations for Multi-agent Reinforcement Learning}
\author{
Zican~Hu$^1$, Zongzhang~Zhang$^2$, Huaxiong~Li$^1$, Chunlin~Chen$^1$, Hongyu~Ding$^1$, Zhi~Wang$^{1}$\thanks{Corresponding author.}\vspace{0.2em}\\
\normalsize $^1~$Department of Control Science and Intelligent Engineering, Nanjing University \\
\normalsize $^2~$School of Artificial Intelligence, Nanjing University\\
\texttt{\{zicanhu,hongyuding\}@smail.nju.edu.cn}\\
\texttt{\{zzzhang,huaxiongli,clchen,zhiwang\}@nju.edu.cn}
}
\begin{document}

\maketitle

\begin{abstract}
Real-world multi-agent tasks usually involve dynamic team composition with the emergence of roles, which should also be a key to efficient cooperation in multi-agent reinforcement learning (MARL). Drawing inspiration from the correlation between roles and agent's behavior patterns, we propose a novel framework of \textbf{A}ttention-guided \textbf{CO}ntrastive \textbf{R}ole representation learning for \textbf{M}ARL (\textbf{ACORM}) to promote behavior heterogeneity, knowledge transfer, and skillful coordination across agents. First, we introduce mutual information maximization to formalize role representation learning, derive a contrastive learning objective, and concisely approximate the distribution of negative pairs. Second, we leverage an attention mechanism to prompt the global state to attend to learned role representations in value decomposition, implicitly guiding agent coordination in a skillful role space to yield more expressive credit assignment. Experiments on challenging StarCraft II micromanagement and Google research football tasks demonstrate the state-of-the-art performance of our method and its advantages over existing approaches. Our code is available at \href{https://github.com/NJU-RL/ACORM}{https://github.com/NJU-RL/ACORM}.
\end{abstract}

\section{INTRODUCTION}
Cooperative multi-agent reinforcement learning (MARL) aims to coordinate a system of agents towards optimizing global returns~\citep{vinyals2019grandmaster}, and has witnessed significant prospects in various domains, such as autonomous vehicles~\citep{zhou2020smarts}, smart grid~\citep{chen2021powernet}, robotics~\citep{yu2023asynchronous}, and social science~\citep{leibo2017multi}.
Training reliable control policies for coordinating such systems remains a major challenge.
The centralized training with decentralized execution (CTDE)~\citep{foerster2016learning} hybrids the merits of independent Q-learning~\citep{foerster2017stabilising} and joint action learning~\citep{sukhbaatar2016learning}, and becomes a compelling paradigm that exploits the centralized training opportunity for training fully decentralized policies~\citep{wang2023more}.
Subsequently, numerous popular algorithms are proposed, including VDN~\citep{sunehag2018value}, QMIX~\citep{rashid2020monotonic}, MAAC~\citep{iqbal2019actor}, and MAPPO~\citep{yu2022surprising}.

Sharing policy parameters is crucial for scaling these algorithms to massive agents with accelerated cooperation learning~\citep{fu2022revisiting}.
However, it is widely observed that agents tend to acquire homogeneous behaviors, which might hinder diversified exploration and sophisticated coordination~\citep{christianos2021scaling}.
Some methods~\citep{li2021celebrating,jiang2021emergence,liu2023contrastive} attempt to promote individualized behaviors by distinguishing each agent from the others, while they often neglect the prospect of effective team composition with implicit task allocation.
Real-world multi-agent tasks usually involve dynamic team composition with the emergence of roles~\citep{shao2022self,hu2022policy}.~\footnote{Taking the football game~\citep{kurach2020google} as an example, the midfielders are primarily responsible for delivering the ball to the forwards to coordinate shots on goal in the offensive phase, while they need to drop back and join the defenders to block passing lanes on the defensive.}
Early works introduce the role concept into multi-agent systems~\citep{dastani2003role,sims2008automated,lhaksmana2018role}, while they usually require prior domain knowledge to pre-define role responsibilities.
Recently, ROMA~\citep{wang2020roma} learns emergent roles conditioned solely on current observations, and RODE~\citep{wang2021rode} associates each role with a fixed subset of the joint action space. 
COPA~\citep{liu2021coach} allows dynamic role allocation via distributing a global view of team composition to each agent during execution.
Some works decompose the task into a set of skills~\citep{liu2022heterogeneous} or subtasks~\citep{yang2022ldsa,iqbal2022alma} with a hierarchical structure for control.
Overall, existing role-based methods still suffer from several deficiencies, such as insufficient characterization of complex behaviors for role emergence, neglect of evolving team dynamics, or relaxation of the CTDE constraint.

To better leverage dynamic role assignment, we propose a novel framework of \textbf{A}ttention-guided \textbf{CO}ntrastive \textbf{R}ole representation learning for \textbf{M}ARL (\textbf{ACORM}).
Our main insight is to learn a compact role representation that can capture complex behavior patterns of agents, and use that role representation to promote behavior heterogeneity, knowledge transfer, and skillful coordination across agents.
First, we formalize the learning objective as mutual information maximization between the role and its representation, to maximally reduce role uncertainty given agent's behaviors while minimally preserving role-irrelevant information.
We introduce a contrastive learning method to optimize the infoNCE loss, a mutual information lower bound.
To concisely approximate the distribution of negative pairs, we extract agent behaviors by encoding its trajectory into a latent space, and periodically partition all agents into several clusters according to their latent embeddings where points from different clusters are paired as negative.
Second, during centralized training, we employ an attention mechanism to prompt the global state to attend to learned role representations in value decomposition.
The attention mechanism implicitly guides agent coordination in a skillful role space, thus yielding more expressive credit assignment with the emergence of roles.~\footnote{Taking StarCraft II as an example, the acquired roles represent diverse strategies in a team-based manner, such as focusing fire, sneaking attack, and drawing fire.
Agents with similar role representations learn a specialized strategy more efficiently by more positive information sharing, and the attention mechanism is responsible for coordinating these heterogeneous behaviors more strategically with clearer role extraction in the team.}

ACORM is fully compatible with CTDE methods, and we realize ACORM on top of two popular MARL algorithms, QMIX~\citep{rashid2020monotonic} and MAPPO~\citep{yu2022surprising}, \textcolor{black}{benchmarked on challenging StarCraft multi-agent challenge (SMAC)~\citep{samvelyan2019starcraft} and Google research football (GRF)~\citep{kurach2020google} environments.}
Experiments demonstrate that ACORM achieves state-of-the-art performance on most scenarios. 
Visualizations of learned role representations, heterogeneous behavior patterns, and attentional value decomposition shed further light on our advantages.
Ablation studies confirm that ACORM promotes higher coordination capacity by virtue of contrastive role representation learning and attention-guided credit assignment, respectively, even if agents have the same innate characteristics.
In summary, our contributions are threefold:
\begin{itemize}[itemsep=0em, leftmargin=1.5em]
\vspace{-0.5em}
    \item We propose a general role representation learning framework based on contrastive learning, which effectively tackles agent homogenization and facilitates efficient knowledge transfer.

    \item We leverage role representations to realize more expressive credit assignment via an attention mechanism, promoting strategical coordination in a sophisticated role space.

    \item We build our method on top of popular QMIX and MAPPO, and \textcolor{black}{conduct extensive experiments on SMAC and GRF to demonstrate our state-of-the-art performance and advantages.}
\end{itemize}

\section{Method}
In this section, we present the ACORM framework.
We consider cooperative multi-agent tasks formulated as a Dec-POMDP~\citep{oliehoek2016concise}, $G=\langle I, S, A, P, R, \Omega, O, n, \gamma \rangle$, where $I$ is a finite set of $n$ agents, $s\in S$ is the global state, and $\gamma\in [0,1)$ is the discount factor.
At each time step, each agent $i$ draws an observation $o_i\in O$ from $\Omega(s,i)$ and selects a local action $a_i\in A$.
After executing the joint action $\bm{a}=[a_1,...,a_n]^\top\in A^n$, the system transitions to a next state $s'$ according to $P(s'|s,\bm{a})$ and receives a reward $r=R(s,\bm{a})$ shared by all agents.

Our idea is to learn a compact role representation that can characterize complex behavior patterns of agents, and use the role information to facilitate individual policy learning and guide agent coordination.
Agents with similar roles can enjoy higher learning efficiency via more aggressive knowledge transfer, and agent heterogeneity is also guaranteed with the discrimination of diverse roles.
Formally, we propose the following definition of the role and its representation.
\begin{definition}
Given a cooperative multi-agent task $G=\langle I, S, A, P, R, \Omega, O, n, \gamma \rangle$, each agent $i$ is associated with a role $M_i\in\mathcal{M}$ that describes its behavior pattern.
Each role $M_i$ is quantified by a role representation $z_i\in\mathcal{Z}$, which is obtained by training a complex function as $z_i = f\left(\rho_i\right)$, where $\rho_i\in\Gamma\equiv (O,A)^l$ is the local trajectory of agent $i$, and $l$ is the number of observation-action pairs.
$\pi_{z_i}: O\times A\times \mathcal{Z}\to [0,1]$ is the individual policy for agent $i$.
\end{definition}

ACORM consists of individual Q-networks in Fig.~\ref{Fig1}(b) and a mixing network in Fig.~\ref{Fig1}(a).
We introduce mutual information maximization to formalize role representation learning, and derive a contrastive learning objective that optimizes agent embeddings $\{e^t_i\}_{i=1}^n$ in a self-supervised way to acquire contrastive role representations $\{z_i^t\}_{i=1}^n$, which is shown in Fig.~\ref{Fig1}(c) and will be introduced in detail in Section~\ref{sec:role}.
In value decomposition, we employ multi-head attention (MHA) to prompt the global state to attend to learned role patterns, guiding skillful agent coordination in the high-level role space for facilitating expressive credit assignment, as described in Fig.~\ref{Fig1}(d) and in Section~\ref{sec:attention}.
Appendix~B presents the pseudocode, and Appendix~C gives the extension to MAPPO.

\begin{figure}[tb]
\centering\includegraphics[width=0.98\textwidth]{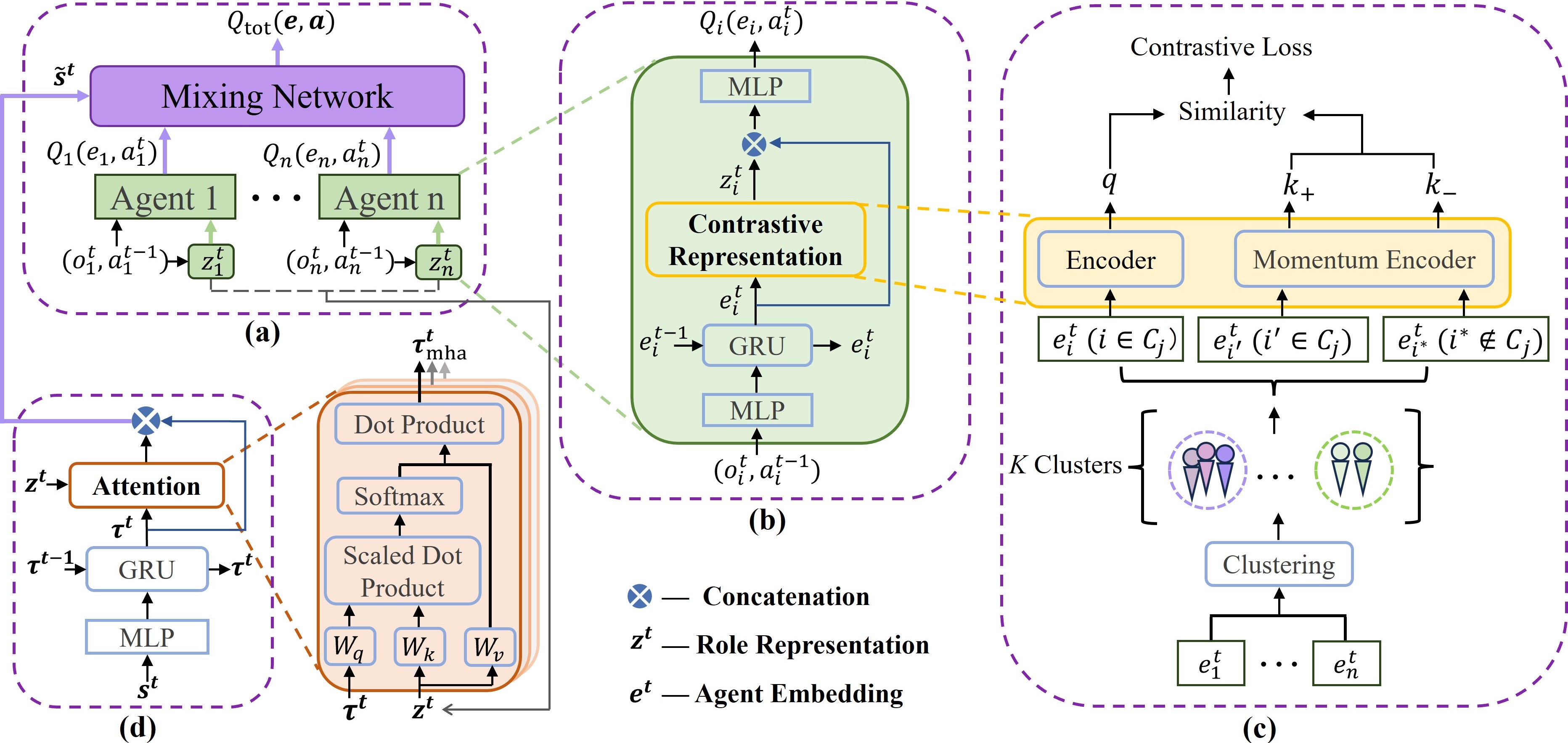}
\caption{The ACORM framework based on QMIX. (a) The overall architecture.
(b) The structure of shared individual Q-network.
(c) The detail of contrastive role representation learning, where $z_i$ is the query $q$, and $z_{i'}/z_{i^*}$ are positive/negative keys $k_+/k_-$. 
(d) The attention module that incorporates learned role representations into the mixing network's input for better value decomposition.
}
\label{Fig1}
\end{figure}

\subsection{Contrastive Role Representations}\label{sec:role}
Our objective is to ensure that agents with similar behavior patterns exhibit closer role representations, while those with notably different strategies are pushed away from each other.
This stands in contrast to using a one-hot ID to preserve the agent's individuality, which lacks adequate discrimination under the paradigm of parameter sharing.
Hence, the primary issues we aim to tackle are: i) how to define a feasible metric to quantify the degree of similarity between agent's behaviors, and ii) how to develop an efficient method to optimize the discrimination of role representations.

\textbf{Agent Embedding.} 
To tackle the first issue, we learn an agent embedding $e_i^t$ from each agent's trajectory to extract complex agent behaviors with contextual knowledge as $e_i^t=f_{\phi}(o_i^t, a^{t-1}_i, e^{t-1}_i)$, where $\phi$ is a shared gated recurrent unit (GRU) encoder, $o^t_i$ is the current observation, $a^{t-1}_i$ is the last action, and $e^{t-1}_i$ is the hidden state of the GRU.
Naturally, the distance between the obtained agent embeddings can serve as the metric to measure the behavior dissimilarity between agents.

\textbf{Contrastive Learning.} 
An ideally discriminative role representation should be dependent on roles associated with agent's behavior patterns, while remaining invariant across agent identities.
We introduce mutual information to measure the mutual dependency between the role and its representation.
Formally, mutual information aims to quantify the uncertainty reduction of one random variable when the other one is observed.
To tackle the second issue, we propose to maximize the mutual information between the role and its representation, and learn a role encoder that maximally reduces role uncertainty while minimally preserving role-irrelevant information.
Mathematically, we formalize the role encoder $\theta$ as a probabilistic encoder $z^t \!\sim\! f_{\theta}(z^t|e^t)$, where $z^t$ denotes the role representation at time $t$, and $e^t\!=\!f_{\phi}(\sum_{t'=1}^t(o^{t'},a^{t'-1}))$ denotes the agent embedding obtained from the history trajectory.~\footnote{Here, we use the superscript $t$ for highlighting the time-evolving property of role representations and relevant variables, and we will partially omit it for simplicity in the below.}
Role $M$ follows the role distribution $P(M)$, and the distribution of agent embedding $e$ is determined by its role.
The learning objective for the role encoder is:
\begin{equation}
    \max~I(z; M) = \mathbb{E}_{z,M}\left[\log \frac{p(M|z)}{p(M)}\right].
    \label{mi}
\end{equation}
In practice, directly optimizing mutual information is intractable.
Inspired by the noise contrastive estimation (InfoNCE)~\citep{oord2018representation} in the literature of contrastive learning~\citep{laskin2020curl,yuan2022robust}, we derive a lower bound of Eq.~(\ref{mi}) with the following theorem.

\begin{theorem}
    Let $\mathcal{M}$ denote a set of roles following the role distribution $P(M)$, and $|\mathcal{M}|\!=\!K$.
    $M\!\in\! \mathcal{M}$ is a given role.
    Let $e\!=\!f_{\phi}(\sum_t (o^t,a^{t-1}))$, $z\!\sim\! f_{\theta}(z|e)$, and $h(e,z)\!=\!\frac{p(z|e)}{p(z)}$, where $\sum_t(o^t,a^{t-1})$ is the agent's local trajectory following a given policy.
    For any role $M^*\!\in\!\mathcal{M}$, let $e^*$ denote the agent embedding generated by the role $M^*$, then we have
    \begin{equation}
        I(z; M) \ge \log K +  \mathbb{E}_{\mathcal{M},z,e}\left[\log\frac{h(e,z)}{\sum_{M^*\in\mathcal{M}}h(e^*,z)}\right].
        \label{mi_bound_main}
    \end{equation}
\end{theorem}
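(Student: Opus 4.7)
The plan is to recognize this as the standard InfoNCE lower bound from contrastive learning, specialized to the role--embedding setting in which $h(e,z) = p(z\mid e)/p(z)$ is the true density ratio, and to derive it cleanly from the non-negativity of a KL divergence between the true role posterior and a tractable ``contrastive'' approximation built from $h$.

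First I would interpret the expression $h(e,z)/\sum_{M^*} h(e^*,z)$ as a normalized distribution $Q(M^*\mid z)$ over the $K$ candidate roles in $\mathcal{M}$, obtained by softmax-normalizing the density ratios. Under a uniform prior $P(M^*) = 1/K$ on $\mathcal{M}$, the true Bayes posterior for identifying which role produced $z$ is
\[
P(M^*\mid z) \;=\; \frac{p(z\mid M^*)}{\sum_{M'\in\mathcal{M}} p(z\mid M')} \;=\; \frac{p(z\mid M^*)/p(z)}{\sum_{M'} p(z\mid M')/p(z)},
\]
which, upon identifying $p(z\mid M^*) = p(z\mid e^*)$ (so that the embedding $e^*$ serves as a sufficient statistic for its role $M^*$), matches $Q$ term-by-term.

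Next I would invoke the non-negativity of the KL divergence, $\mathrm{KL}\bigl(P(\cdot\mid z)\,\|\,Q(\cdot\mid z)\bigr) \ge 0$, which after taking an expectation over $(M,z)\sim p(M,z)$ gives the familiar cross-entropy inequality
\[
\mathbb{E}[\log Q(M\mid z)] \;\le\; \mathbb{E}[\log P(M\mid z)] \;=\; -H(M\mid z).
\]
Combining this with $H(M\mid z) = H(M) - I(z;M) = \log K - I(z;M)$ under the uniform role prior, I get
\[
\mathbb{E}\log\frac{h(e,z)}{\sum_{M^*\in\mathcal{M}} h(e^*,z)} \;\le\; I(z;M) - \log K,
\]
which rearranges to exactly the claim in Eq.~(\ref{mi_bound_main}).

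The main obstacle I anticipate is the identification $p(z\mid M^*) = p(z\mid e^*)$ that lets $Q$ line up with the Bayes posterior: it holds exactly only when the trajectory encoder $\rho_i \mapsto e_i$ makes $e^*$ a sufficient statistic for $M^*$, and in the general stochastic case there is residual slack coming from the fact that $\mathbb{E}_{e^*\mid M^*}[h(e^*,z)] = p(z\mid M^*)/p(z)$ rather than $h(e^*,z)$ pointwise. I would handle this by pointing out that any such slack only strengthens the inequality direction, since replacing the Bayes posterior $P$ by any other distribution $Q$ can only decrease $\mathbb{E}\log Q$; the $\ge$ bound therefore remains valid regardless of how tightly the embedding characterizes its role, matching the usual behavior of InfoNCE with a suboptimal critic.
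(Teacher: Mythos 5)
Your route is genuinely different from the paper's: the paper reproduces the CPC-style derivation (split the positive term out of the denominator, replace the sum over the $K-1$ negatives by $(K-1)$ times its expectation, which equals $1$, apply an elementary log inequality, then Jensen), with the key ingredient being a bridging lemma, $\frac{p(M|z)}{p(M)} = \mathbb{E}_{e\sim p(e|M)}\bigl[\frac{p(z|e)}{p(z)}\bigr]$, that converts the embedding-level density ratio into the role-level posterior ratio appearing in $I(z;M)$. You instead invoke the Gibbs/optimal-critic argument. That argument is clean, but it has a genuine gap exactly at the point you flag and then dismiss.

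The problem is that with $h(e,z)=p(z|e)/p(z)$, the normalized scores $h(e^*,z)/\sum h$ are the exact Bayes posterior for identifying the positive given $z$ \emph{and the embeddings} $e^*$, not given the roles. Running your Gibbs argument at that level yields
\begin{equation*}
\mathbb{E}\Bigl[\log\tfrac{h(e,z)}{\sum_{M^*}h(e^*,z)}\Bigr] = -H(U\mid z, e_{1:K}) = I(U; z\mid e_{1:K}) - \log K \le I(z;e) - \log K,
\end{equation*}
where $U$ is the positive index. Since $M\to e\to z$ is a Markov chain, $I(z;e)\ge I(z;M)$, so this is \emph{weaker} than the claimed bound, not equivalent to it. Your repair --- that any mismatch between $p(z|e^*)$ and $p(z|M^*)$ ``only strengthens the inequality because replacing the Bayes posterior by any other $Q$ decreases $\mathbb{E}\log Q$'' --- goes the wrong way: $Q$ here is not an arbitrary function of $(z,M_{1:K})$ but depends on the embeddings, which carry extra information about the positive index beyond the roles (the positive's embedding is the one that actually generated $z$). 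Gibbs does not apply against the role-level posterior for such a classifier, and $\mathbb{E}[\log Q(U)]$ can strictly exceed $-H(U\mid z,M_{1:K})$. A degenerate check: if the role distribution is a point mass, $I(z;M)=0$, yet the embedding-level InfoNCE objective plus $\log K$ is strictly positive whenever $z$ depends nontrivially on $e$. The missing ingredient is precisely the paper's Lemma A.1 together with a Jensen step (and the expectation approximation of the negative sum), which is where the passage from $e$ to $M$ is actually effected; your proposal needs that bridge, or an equivalent, to reach $I(z;M)$ rather than $I(z;e)$.
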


The proof of this theorem is given in Appendix~A.
Since we cannot evaluate $p(z)$ or $p(z|e)$ directly, we turn to techniques of NCE and importance sampling based on comparing the target value with randomly sampled negative values.
Hence, we approximate $h$ with the exponential of a score function $S(z, z^*)$ that is a similarity metric between latent codes of two examples.
We derive a sampling version of the tractable lower bound to be the role encoder's learning objective as
\begin{equation}
    \min_{\theta}\mathcal{L}_K \!= \!- \mathbb{E}_{M_i\in\mathcal{M}, (e,e')\sim M_i, z\sim f_{\theta}(z|e)} \!\!\left[ \log \frac{\exp (S(z,z'))}{\exp (S(z,z')) + \sum_{M^*\in\mathcal{M}\backslash M_i}\exp(S(z,z^*))} \right]\!\!,
    \label{nce_loss}
\end{equation}
where $\mathcal{M}$ is the set of training roles, $e,e'$ are two instances of agent embeddings sampled from the dataset of role $M_i$, and $z,z'$ are latent representations of $e,e'$. 
For any role $M^*\!\in\!\mathcal{M}\backslash M_i$, $z^*$ is the representation of agent embedding $e^*$ sampled by role $M^*$.
Following the literature, we denote $(e,e')_{M_i}$ as a positive pair and denote $\{(e,e^*)\}_{M^*\in\mathcal{M}\backslash M_i}$ as negative pairs.
The objective in Eq.~(\ref{nce_loss}) optimizes for a $K$-way classification loss to classify the positive pair out of all pairs.
Minimizing the InfoNCE loss $\mathcal{L}_K$ maximizes a lower bound on mutual information in Eq.~(\ref{mi_bound_main}), and this bound becomes tighter as $K$ becomes larger.
The role encoder ought to extract shared features in agent embeddings of the same role to maximize the score of positive pairs, while capturing essential distinctions across various roles to decrease the score of negative pairs.

\textbf{Negative Pairs Generation.}
Periodically, we partition all $n$ agents into $K$ clusters $\{C_j\}_{j=1}^K$ according to agent embeddings.\footnote{In this paper, we simply use K-means~\citep{hartigan1979algorithm} based on Euclidean distances between agent embeddings.
Moreover, it can be easily extended to more complex clustering methods such as Gaussian mixture models~\citep{bishop2006pattern}.
In Appendix~D, we conduct the hyperparameter analysis about the influence of different $K$ values and how to determine the number of clusters automatically.}
Naturally, we encourage role representations from the same cluster to stay close to each other, while differing from agents in other clusters.
For agent $i$, we denote its role representation $z_i$ as the query $q$, and role representations of other agents as the keys $\mathbb{K}=\{z_1,...,z_n\}\backslash z_i$.
Points from the same cluster as the query, $i\in C_j$, are set as positive keys $\{k_+\}$ and those from different clusters are set as negative $\{k_-\} = \mathbb{K}\backslash \{k_+\}$.
In practice, we use bilinear products~\citep{laskin2020curl} for the score function in Eq.~(\ref{nce_loss}), and similarities between the query and keys are computed as $q^\top W k$, where $W$ is a learnable parameter matrix.
The InfoNCE loss in Eq.~(\ref{nce_loss}) is rearranged as 
\begin{equation}
    \mathcal{L}_K \!=\!-\log \frac{\exp(q^\top W k_+)}{\exp(q^\top Wk_+)+\exp(q^\top Wk_-)} \!=\! -\log \frac{\sum\nolimits_{i'\in C_j}\exp(z_i^\top Wz_{i'})}{\sum\limits_{i'\in C_j}\!\!\!\exp(z_i^\top Wz_{i'})\!+\!\!\!\sum\limits_{i^*\notin C_j}\!\!\!\exp(z_i^\top Wz_{i^*})}.
\label{nce_loss2}
\end{equation}
Following the MoCo method~\citep{he2020momentum}, we maintain a query encoder $\theta_q$ and a key encoder $\theta_k$, and use a momentum update to facilitate the key representations' consistency as
\begin{equation}\label{soft_update}
    \theta_k \gets \beta \theta_k + (1-\beta) \theta_q,
\end{equation}
where $\beta \!\in\! [0,1)$ is a momentum coefficient, and only parameters $\theta_q$ are updated by backpropagation.

\subsection{Attention-Guided Role Coordination}\label{sec:attention}
After acquiring agents' contrastive role representations from their \textit{local} information, we introduce an attention mechanism in value decomposition to enhance agent coordination in the sophisticated role space with a \textit{global} view.
Popular CTDE algorithms, such as QMIX, realize behavior coordination across agents via a mixing network that estimates joint action-values as a monotonic combination of per-agent values, and the mixing network weights are conditioned on the system's global state.
Naturally, it is interesting to incorporate the learned role information into the mixing network to facilitate skillful coordination across roles.
The simplest approach is to concatenate the global state and role representations for generating mixing network weights, while it fails to exploit the internal structure to effectively extract correlations in the role space.
Fortunately, the attention mechanism~\citep{vaswani2017attention} aligns perfectly with our intention by prompting the global state to attend to learned role patterns, thus providing more expressive credit assignment in value decomposition.

The attention mechanism aims to draw global dependencies without regard to their distance in the input or output sequences, and has gained substantial popularity as a fundamental building block of compelling sequence modeling and transduction models, such as GPT~\citep{brown2020language}, vision transformers~\citep{dosovitskiy2021image}, and decision transformers~\citep{chen2021decision}.
An attention function can be described as mapping a query and a set of key-value pairs to a weighted sum of the values, where the weight assigned to each value is computed by a compatibility function of the query and corresponding key.
As role representations are learned based on extracting agent behaviors from history trajectories, we also use a GRU to encode the history states $(\bm{s}^0, \bm{s}^1,..., \bm{s}^t)$ into a state embedding $\bm{\tau}^t$ for facilitating information matching between states and role representations.
Then, we set the state embedding $\bm{\tau} \in \mathbb{R}^{d_s\times d}$ as the query, and the role representations $\bm{z}=[z_1,...,z_n]^\top \in \mathbb{R}^{n\times d}$ as both the key and value, where $d$ is the dimension of role representation and $d_s$ is the length of state embedding. 
Formally, we calculate a weighted combination of role representations as
\begin{equation}\label{att_i}
    \bm{\tau}_{\text{atten}}=\sum\nolimits_{i=1}^n \alpha_i v_i=\sum\nolimits_{i=1}^n \alpha_i \cdot z_i W^V,
\end{equation}
where the value $v_i$ is a linear transformation of $z_i$ by a shared parameter matrix $W^V\in \mathbb{R}^{d\times d_v}$.
The attention weight $\alpha_i$ computes the relevance between the state embedding $\bm{\tau}$ and the $i$-th agent's role representation $z_i$, and we apply a softmax function to obtain the weight as
\begin{equation}\label{af}
    \alpha_i=\frac{\exp\left(\frac{1}{\sqrt{d_k}} \cdot \bm{\tau} W^Q \cdot \left(z_i W^K\right)^\top\right)}{\sum_{j=1}^{n}\exp\left(\frac{1}{\sqrt{d_k}} \cdot \bm{\tau} W^Q \cdot \left(z_j W^K\right)^\top\right)},
\end{equation}
where $W^Q, W^K \!\in\! \mathbb{R}^{d\times d_k}$ are shared parameter matrices for linear transformation of query-key pairs, and $1/\sqrt{d_k}$ is a factor that scales the dot-product attention. 
We use multi-head attention (MHA) for allowing the model to jointly attend to information from different representation subspaces at different positions, and obtain the aggregated output as
\begin{equation}\label{att_all}
\bm{\tau}_{\text{mha}}=\text{Concat}\left(\bm{\tau}_{\text{atten}}^1,...,\bm{\tau}_{\text{atten}}^H \right)W^O,
\end{equation}
where $\bm{\tau}_{\text{atten}}^h~(h\in\{1,2,...,H\})$ is the attention output using projections of $W^Q_h, W^K_h$, and $W^V_h$, and $W^O\in \mathbb{R}^{H\cdot d_v\times d}$ is the parameter matrix for combining outputs of all heads.
Finally, the MHA output is combined with the global state to be responsible for generating weights of the mixing network, as shown in Fig.~\ref{Fig1}(d).
In this way, we flexibly leverage role representations to offer more comprehensive information for value decomposition.
By allowing the global state to attend to the learned role patterns, the attention mechanism implicitly guides the agent coordination in a skillful role space, thus yielding more expressive credit assignment with the emergence of roles.

\section{Experiments}
We evaluate ACORM to answer the following questions: 
(i) Can ACORM facilitate learning efficiency and stability in complex multi-agent domains? If so, what are the respective contributions of different modules to the performance gains? (See Sec.~\ref{sec:performanceNablation}). 
(ii) Can ACORM learn meaningful role representations associated with agent's behavior patterns and achieve effective dynamic team composition? (See Sec.~\ref{sec:visualization}).
(iii) Can ACORM successfully attend to learned role representations to realize skillful role coordination and more expressive credit assignment? (See Sec.~\ref{sec:visual_atten}).

\textbf{Implementations.} 
We choose SMAC~\citep{samvelyan2019starcraft} as the first testbed for its rich maps and convenient visualization tools, and realize ACORM on top of the popular QMIX algorithm.
For visualization, we render game scenes, and show agent embeddings and role representations using t-SNE.
\textcolor{black}{Appendix~C gives evaluation results on the GRF benchmark, and  Appendix D shows the algorithm architecture, experimental settings and results of MAPPO-based ACORM.}

\textbf{Baselines.} 
We compare ACORM to \texttt{QMIX} and six baselines: 1) \texttt{RODE}~\citep{wang2021rode} with action space decomposition; 2) \texttt{EOI}~\citep{jiang2021emergence} that encourages diversified individuality via training an observation-to-identity classifier; 3) \texttt{MACC}~\citep{yuan2022multi} that uses attention to concentrate on most related subtasks; 4) \texttt{CDS}~\citep{li2021celebrating} that introduces diversity in both optimization and representation; 5) \texttt{CIA}~\citep{liu2023contrastive} that boosts credit-level distinguishability via contrastive learning; and 6) \texttt{GoMARL}~\citep{zang2023automatic} with an automatic grouping mechanism.

\subsection{Performance and Ablation Study}\label{sec:performanceNablation}
For evaluation, all experiments are carried out with five different random seeds, and the mean of the test win rate is plotted as the bold line with 95\% bootstrapped confidence intervals of the mean (shaded).
Appendix~B describes the detailed setting of hyperparameters.

\textbf{Performance.} 
SMAC contains three kinds of maps: \textit{easy}, \textit{hard}, and \textit{super hard}.
Super hard maps are typically complex tasks that require deeper exploration of diversified behaviors and more skillful coordination.
Since ACORM is designed to promote these properties, the performance on these maps is especially significant to validate our research motivation and advantages.
\textcolor{black}{Fig.~\ref{performance_fig} presents the performance of ACORM on six representative tasks, and performance on more maps can be found in Appendix~B.}
ACORM obtains the best performance on all super-hard maps and most of the other maps. 
A noteworthy point is that ACORM outperforms all baselines by the largest margin on super hard maps that demand a significantly higher degree of behavior diversity and coordination: \texttt{MMM2}, \texttt{3s5z\_vs\_3s6z}, and \texttt{corridor}.
In these maps, ACORM gains an evidently faster-increasing trend regarding the test win rate in the first million training steps, which is supposed to benefit from the high efficiency of exploring cooperatively heterogeneous behaviors via our discriminative role assignment.
Moreover, ACORM exhibits the lowest variance in learning curves, signifying not only superior learning efficiency but also enhanced training stability.

\begin{figure}[t]
\begin{center}
\includegraphics[width=0.98\textwidth]{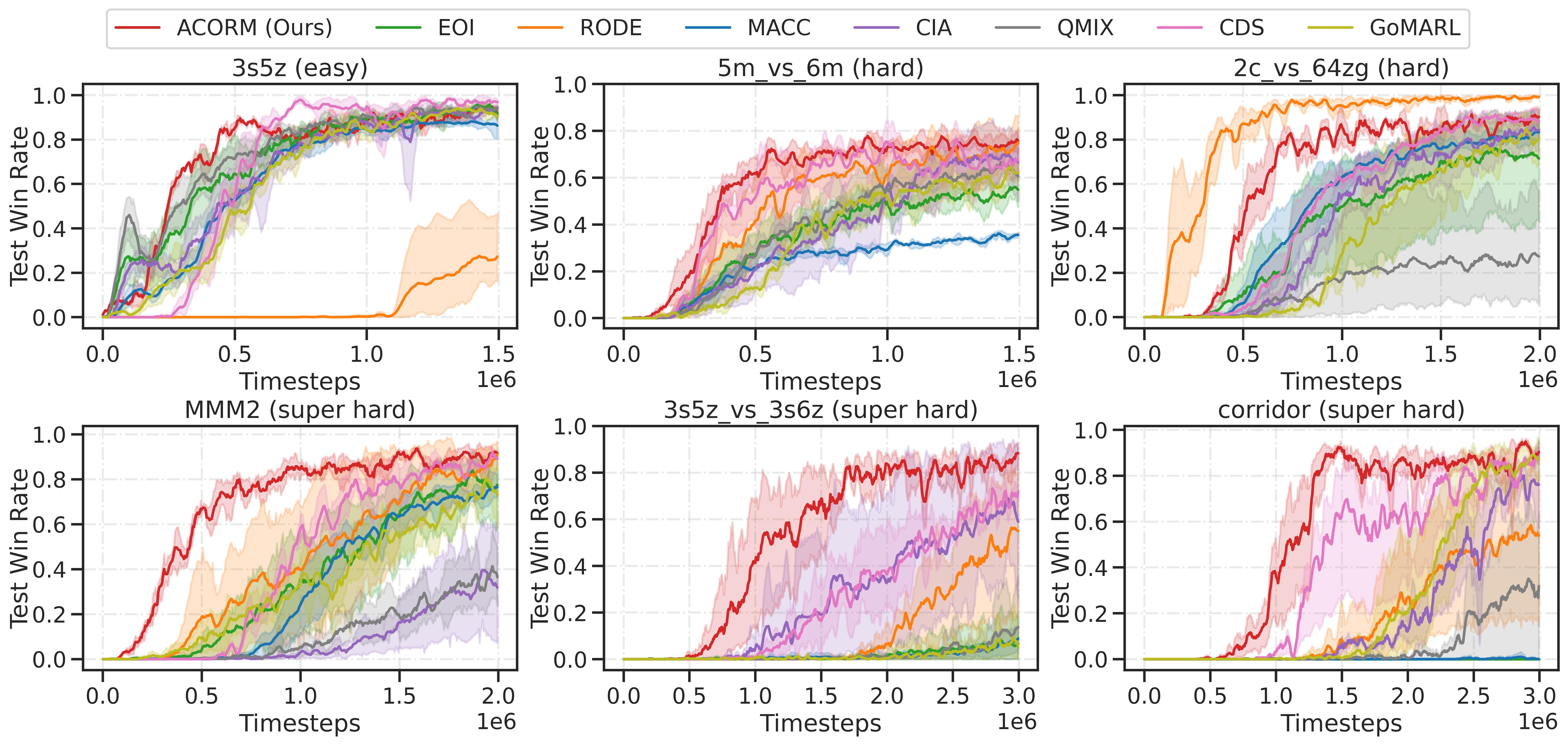}
\end{center}
\caption{\textcolor{black}{Performance comparison between ACORM and baselines on six representative maps.}}
\label{performance_fig}
\end{figure}

\begin{figure}[h]
\begin{center}
\includegraphics[width=0.98\textwidth]{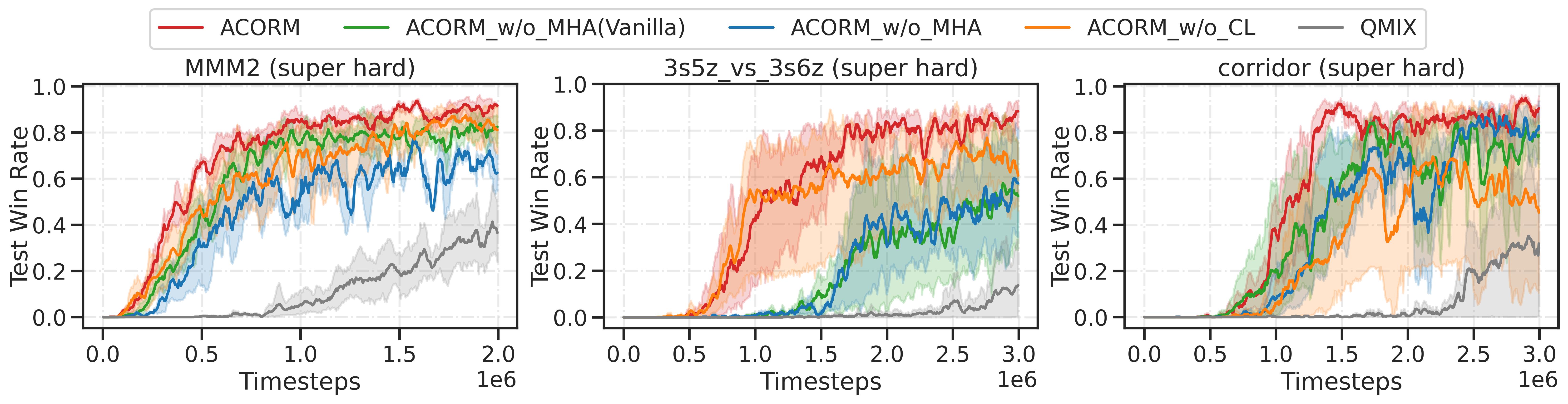}
\end{center}
\caption{
\textcolor{black}{Ablation studies. 
ACORM\_w/o\_CL removes contrastive learning, ACORM\_w/o\_MHA removes attention, and ACORM\_w/o\_MHA (Vanilla) removes attention and state encoding.}
}
\label{fig:ablation}
\end{figure}

\textbf{Ablations.}
We carry out ablation studies to test the respective contributions of contrastive learning and attention.
We compare ACORM to four ablations: i) \texttt{ACORM\_w/o\_CL}, it only excludes contrastive learning; \textcolor{black}{ii) \texttt{ACORM\_w/o\_MHA}, it only removes the attention module; iii) \texttt{ACORM\_w/o\_MHA (Vanilla)}, it removes both the attention and state encoding, and directly feeds the current state into the mixing network like QMIX}; and iv) \texttt{QMIX}, it removes all components. 
For ablations, all other structural modules are kept consistent strictly with the full ACORM.

Figure~\ref{fig:ablation} shows ablation results on three super hard maps, and more ablations on other maps can be found in Appendix~F.
When either of the two components is removed, ACORM obtains decreased performance and still outperforms QMIX.
It demonstrates that both components are essential for ACORM's capability and they are complementary to each other.
Especially, both ACORM\_w/o\_CL and ACORM\_w/o\_MHA achieve significant performance gains compared to QMIX, which further verifies the respective effectiveness in tackling complex tasks.
\textcolor{black}{Specifically, ACORM\_w/o\_MHA (Vanilla) obtains very similar performance compared to ACORM\_w/o\_MHA, indicating that the effectiveness comes from the attention module other than encoding the state trajectory via a GRU.}

\subsection{Contrastive Role Representations} \label{sec:visualization}
To answer the second question, we gain deep insights into learned role representations through visualization on the example \texttt{MMM2} task, where the agent controls a team of units (1 Medivac, 2 Marauders, and 7 Marines) to battle against an opposing army (1 Medivac, 3 Marauders, and 8 Marines).
Fig.~\ref{fig:visual_cl} presents example rendering scenes in an evaluation trajectory of the trained ACORM policy.
Initially ($t\!=\!1, 12$), all agent embeddings tend to be crowded together with limited discrimination, and the K-means algorithm moderately separates them into several clusters.
Via contrastive learning, the acquired role representations within the same cluster are pushed closer to each other, and those in different clusters are notably separated.
At a later stage ($t\!=\!40$), agent embeddings are already scattered widely throughout the space with a good clustering effect so far.
This phenomenon indicates that the system has learned effective role assignment with heterogeneous behavior patterns.
Then, the role encoder transforms these agent embeddings into more discriminative role representations.

\begin{figure}[tb]
\begin{center}
\includegraphics[width=0.98\textwidth]{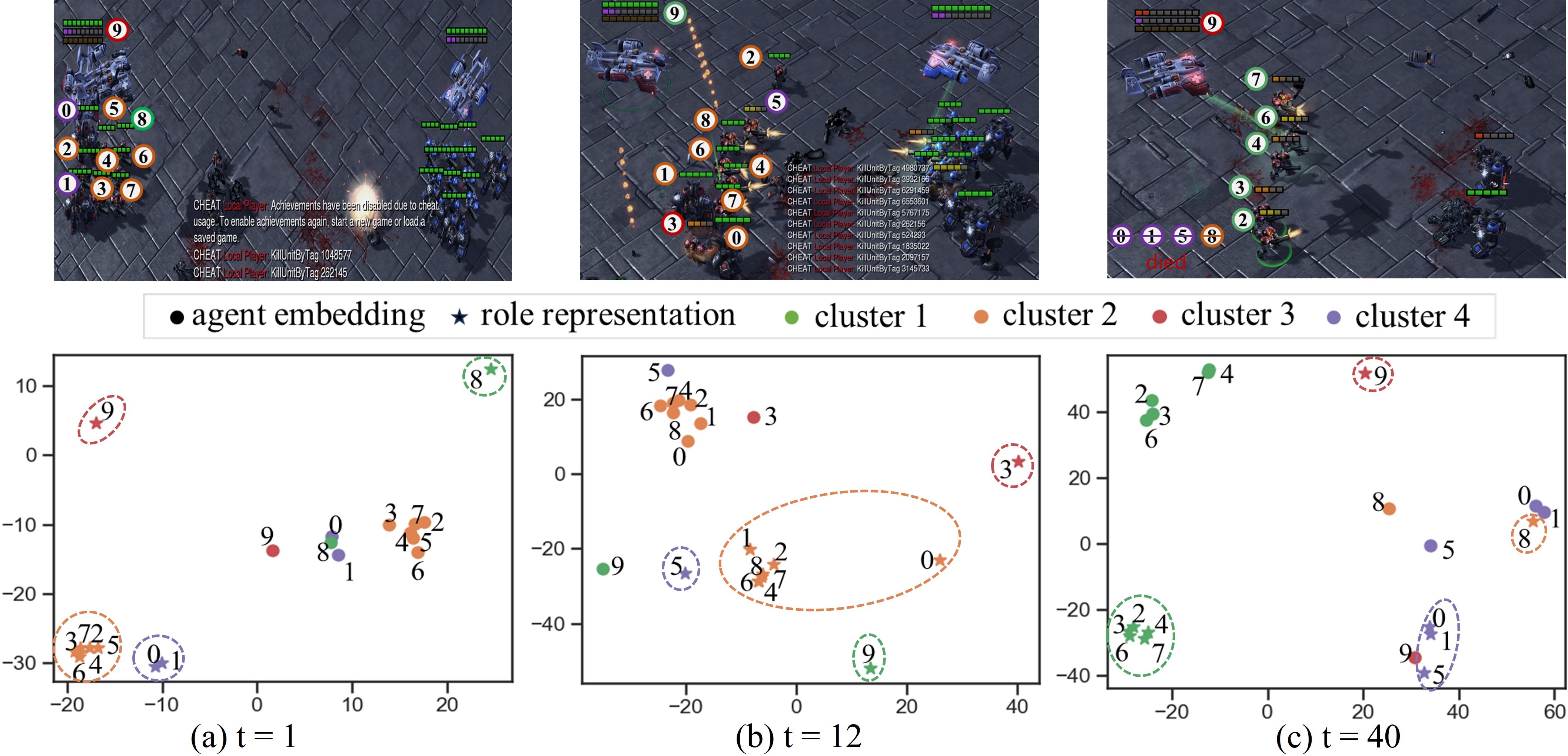}
\end{center}
\caption{
Example rendering scenes at three time steps in an evaluation trajectory generated by the trained ACORM policy on \texttt{MMM2}.
The upper row shows screenshots of combat scenarios that contain the information of positions, health points, shield points, states of ally and enemy units, etc.
The lower row visualizes the corresponding agent embeddings (denoted with bullets `$\bullet$') and role representations (denoted with stars `$\bm{\star}$') by projecting these vectors into 2D space via t-SNE for qualitative analysis, where agents within the same cluster are depicted using the same color.
}
\label{fig:visual_cl}
\end{figure}

The team composition naturally evolves over time.
At $t\!=\!1$, Marauders $\{0,1\}$ form a group and Marines $\{2,3,4,5,6,7\}$ form another due to their intrinsic agent heterogeneity.
In the middle of the battle $t\!=\!12$, Marauders $\{0,1\}$ join the same group of Marines $\{2,4,7,6,8$\} to focus fire on enemies, while Marines $\{3,5\}$ separate from the offense team since they are severely injured.
Late in the battle at $t\!=\!40$, Marines $\{2,3,4,6,7\}$ are still in the offense team, while Marauders $\{0,1\}$ and Marine $5$ fall into the same dead group.
In summary, it is clearly verified that ACROM learns meaningful role representations associated with agent's behavior patterns and achieves effective dynamic team composition.
More insights and explanations can be found in Appendix G.

\begin{figure}[tb]
\begin{center}
\includegraphics[width=0.98\textwidth]{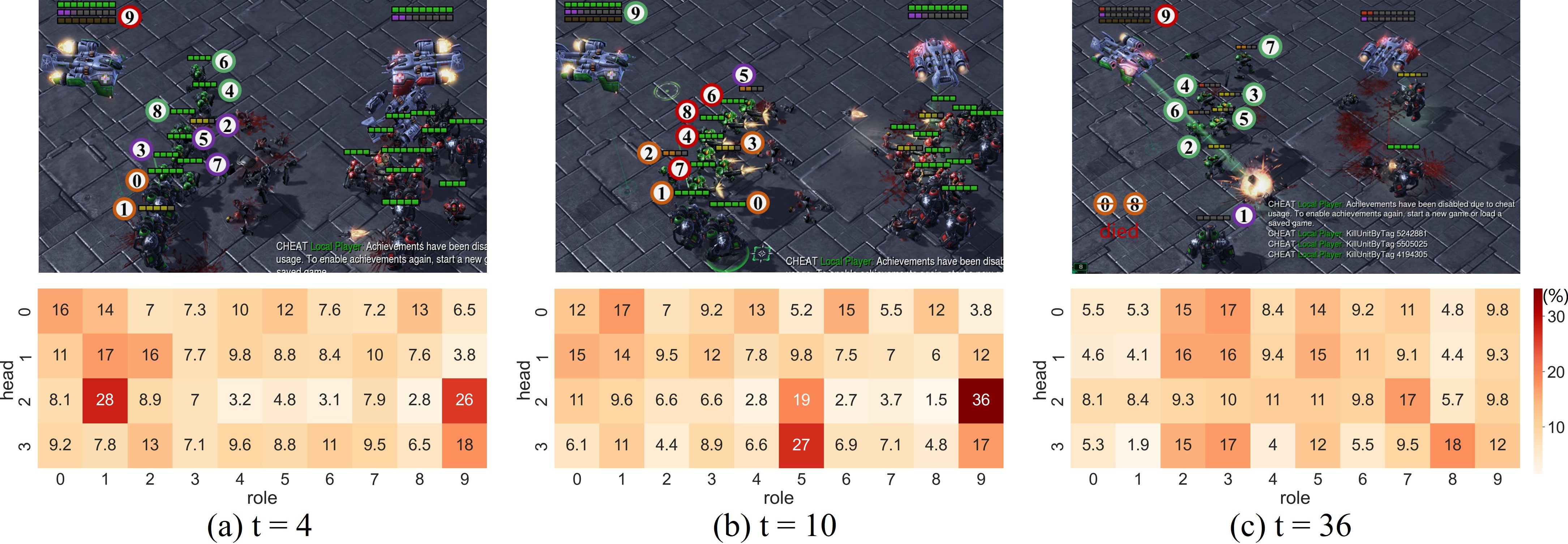}
\end{center}
\caption{
Example rendering scenes in an evaluation trajectory generated by the trained ACORM policy on \texttt{MMM2}. 
The lower row visualizes attention weights ($\bm{\alpha}$ in Eq.~(\ref{att_i})) of all four heads that explain how the global state attends to each role to guide skillful coordination in the role space.
A higher weight means a larger contribution made by the corresponding role for value decomposition.
}
\label{fig:visual_mha}
\end{figure}

\subsection{Attention-Guided Role Coordination} \label{sec:visual_atten}
To answer the last question, we visualize attention weights $\bm{\alpha}$ in Eq.~(\ref{att_i}) using heatmaps, as shown in Fig.~\ref{fig:visual_mha}.
The number of agent clusters is $K\!=\!4$.
In most heads, roles in the same cluster have similar attention weights, while different clusters exhibit significantly varying weights (e.g., in all four heads at $t\!=\!10$, the weight distribution over four clusters: $\{0,1,2,3\}$, $\{5\}$, $\{4,6,7,8\}$, $\{9\}$).
This phenomenon indicates that the global state has successfully attended to the learned role patterns.

The attention mechanism draws several interesting insights on the battle, such as: i) Head $2$ evidently attends to the injury-rescue pattern, since the largest weights come from Medivac $9$ and low-health units (Marauder $1$ at $t\!=\!4$, Marine $5$ at $t\!=\!10$, and Marines $\{3,4,5,6,7\}$ at $t\!=\!36$). 
ii) In most heads, attention weights of Marauders $\{0,1\}$ are usually high at the beginning, and are significantly decreased over time. It corresponds to the behavior pattern that Marauders play an important role at early attacks and the offensive mission will gradually be handed over to Marines.
iii) On the verge of victory at $t\!=\!36$, the primary concern is using Marines $\{2,3,5\}$ to make final attacks with low-health Marines $\{4,6,7\}$ providing auxiliary support. Obviously, heads $\{0,1\}$ intuitively reflect this strategy, as Marines $\{2,3,5\}$ have the highest weights, followed by Marines $\{4,6,7\}$ and all other units.
Moreover, the capability of our attention module could be much more profound than these examples from the superficial visualization.

\section{Related Work}

\textbf{Agent Heterogeneity.} 
As a compelling paradigm, CTDE~\citep{foerster2016learning} has yielded numerous algorithms~\citep{lowe2017multi,son2019qtran,wang2023more}.
Many of them share policy parameters to improve learning efficiency and scale to large-scale systems, which results in homogeneous behaviors across agents~\citep{liu2022heterogeneous}.
To promote diversity, SePS~\citep{christianos2021scaling} partitions agents into a fixed set of groups and shares parameters within the same group only, while ignoring evolving dynamics of the team.
GoMARL~\citep{zang2023automatic} generates dynamic groups with an automatic grouping mechanism to possess diverse strategies.
MAVEN~\citep{mahajan2019maven} learns diverse exploratory behaviors by introducing a latent space for hierarchical control.
CDS~\citep{li2021celebrating} equips each agent with an additional local Q-function to decompose the policy to the shared and non-shared part.
EOI~\citep{jiang2021emergence} promotes individuality by encouraging agents to visit their own familiar observations.
CIA~\citep{liu2023contrastive} boosts agent distinguishability in value decomposition via contrastive learning.
While differentiating each agent from the rest, these methods neglect the development of effective team composition with implicit task allocation, and might hinder the discovery of sophisticated coordination patterns.

\textbf{Role Emergence.} 
\textcolor{black}{Researchers have also introduced the role concept into multi-agent tasks~\citep{sims2008automated,lhaksmana2018role,xia2023dynamic,cao2023linda}}, or similarly, the concept of skills~\citep{yang2020hierarchical} or subtasks~\citep{yuan2022multi}.
ROMA~\citep{wang2020roma} conditions individual policies on roles and solely relies on the current observation to generate the role embedding, which might be inadequate for capturing complex agent behaviors.
RODE~\citep{wang2021rode} associates each role with a fixed subset of the full action space to reduce learning complexity.
Following RODE, SIRD~\citep{zeng2023effective} transforms role discovery into hierarchical action space clustering.
Nonetheless, they neglect the evolving dynamics of the team since roles are kept fixed in the training stage.
For dynamic role assignment, some works learn identity representations to group agents during training, and maintain a selection strategy to realize the assignment from agents to skills~\citep{liu2022heterogeneous} or subtasks~\citep{yang2022ldsa,iqbal2022alma}.
Nevertheless, they encode the identity solely from a one-hot vector, which might be insufficient to distinguish complex agent characteristics.
COPA~\citep{liu2021coach} realizes dynamic role allocation via periodically distributing a global view of team composition to each agent even in execution.
However, it relaxes the CTDE constraint by introducing communication during decentralized execution, and the global composition is simply sampled from a fixed set of teams.

In summary, our method differs from the above approaches involving the role concept, and exhibits several promising advantages.
Our method strictly follows the CTDE paradigm, accommodates the dynamic nature of multi-agent systems, and learns more efficient role representations.

\textbf{Contrastive Learning and Attention Mechanism}.
Contrastive learning is gaining widespread popularity for self-supervised representation learning in various domains~\citep{he2020momentum,su2022contrastive,laskin2020curl}.
As a simple and effective technique, contrastive learning is also investigated to assist MARL tasks, such as boosting the credit-level distinguishability~\citep{liu2023contrastive}, facilitating the utilization of the agent-level contextual information~\citep{song2023ma2clmasked}, and grounding agent communication~\citep{lo2022learning}.
In this study, we apply contrastive learning to optimize role representations, facilitating sophisticated coordination with better role assignment.

Attention is the fundamental building block of famous transformer architectures~\citep{vaswani2017attention} that exhibit growing dominance in advances of AI research~\citep{brown2020language,dosovitskiy2021image,chen2021decision}.
Due to its superiority in extracting dependencies between sequences, attention has been widely applied to MARL domains for various utilities, such as learning a centralized critic~\citep{iqbal2019actor}, concentrating on relevant subtasks~\citep{yuan2022multi}, formulating MARL as a sequence modeling problem~\citep{wen2022multi}, addressing stochastic partial observability~\citep{phan2023attention}, etc~\citep{yang2020qatten,shao2023complementary,zhai2023dynamic}.
In this study, we use attention to guide skillful role coordination for more expressive credit assignment.

\section{Conclusion and Discussion}
In this paper, we propose a general framework that learns contrastive role representations to promote behavior heterogeneity and knowledge transfer across agents, and facilitates skillful coordination in a sophisticated role space via an attention mechanism.
Experimental results and ablations verify the superiority of our method, and deep insights via visualization demonstrate the achievement of meaningful role representations and skillful role coordination.
Though, our method does not consider the exploitation of history exploratory trajectories for extracting roles, and also needs an explicit clustering component with a pre-defined number of total behavior patterns. 
We leave these directions as future work.
Moreover, extending our framework to offline settings is a promising line for practical scenarios where online interaction is expensive and even infeasible.

\section*{Acknowledgements}
The work was supported by the National Natural Science Foundation of China under Grant 62376122, Grant 62073160, Grant 62276126, and Grant 62176116.

\bibliography{iclr2024_conference}

\begin{thebibliography}{59}
\providecommand{\natexlab}[1]{#1}
\providecommand{\url}[1]{\texttt{#1}}
\expandafter\ifx\csname urlstyle\endcsname\relax
  \providecommand{\doi}[1]{doi: #1}\else
  \providecommand{\doi}{doi: \begingroup \urlstyle{rm}\Url}\fi

\bibitem[Bishop(2006)]{bishop2006pattern}
Christopher~M Bishop.
\newblock \emph{Pattern Recognition and Machine Learning}, volume~4.
\newblock Springer, 2006.

\bibitem[Brown et~al.(2020)Brown, Mann, Ryder, Subbiah, Kaplan, Dhariwal, Neelakantan, Shyam, Sastry, Askell, et~al.]{brown2020language}
Tom Brown, Benjamin Mann, Nick Ryder, Melanie Subbiah, Jared~D Kaplan, Prafulla Dhariwal, Arvind Neelakantan, Pranav Shyam, Girish Sastry, Amanda Askell, et~al.
\newblock Language models are few-shot learners.
\newblock In \emph{Advances in Neural Information Processing Systems}, pp.\  1877--1901, 2020.

\bibitem[Cao et~al.(2023)Cao, Yuan, Wang, Zhang, Zhang, Yu, and Zhan]{cao2023linda}
Jiahan Cao, Lei Yuan, Jianhao Wang, Shaowei Zhang, Chongjie Zhang, Yang Yu, and De-Chuan Zhan.
\newblock {LINDA}: Multi-agent local information decomposition for awareness of teammates.
\newblock \emph{Science China Information Sciences}, 66\penalty0 (182101), 2023.

\bibitem[Chen et~al.(2021{\natexlab{a}})Chen, Chen, Li, Chu, Yao, Qiu, and Lin]{chen2021powernet}
Dong Chen, Kaian Chen, Zhaojian Li, Tianshu Chu, Rui Yao, Feng Qiu, and Kaixiang Lin.
\newblock Powernet: Multi-agent deep reinforcement learning for scalable powergrid control.
\newblock \emph{IEEE Transactions on Power Systems}, 37\penalty0 (2):\penalty0 1007--1017, 2021{\natexlab{a}}.

\bibitem[Chen et~al.(2021{\natexlab{b}})Chen, Lu, Rajeswaran, Lee, Grover, Laskin, Abbeel, Srinivas, and Mordatch]{chen2021decision}
Lili Chen, Kevin Lu, Aravind Rajeswaran, Kimin Lee, Aditya Grover, Misha Laskin, Pieter Abbeel, Aravind Srinivas, and Igor Mordatch.
\newblock Decision transformer: Reinforcement learning via sequence modeling.
\newblock In \emph{Advances in Neural Information Processing Systems}, pp.\  15084--15097, 2021{\natexlab{b}}.

\bibitem[Christianos et~al.(2021)Christianos, Papoudakis, Rahman, and Albrecht]{christianos2021scaling}
Filippos Christianos, Georgios Papoudakis, Muhammad~A Rahman, and Stefano~V Albrecht.
\newblock Scaling multi-agent reinforcement learning with selective parameter sharing.
\newblock In \emph{Proceedings of International Conference on Machine Learning}, pp.\  1989--1998, 2021.

\bibitem[Dastani et~al.(2003)Dastani, Dignum, and Dignum]{dastani2003role}
Mehdi Dastani, Virginia Dignum, and Frank Dignum.
\newblock Role-assignment in open agent societies.
\newblock In \emph{Proceedings of International Conference on Autonomous Agents and Multi-Agent Systems}, pp.\  489--496, 2003.

\bibitem[Dosovitskiy et~al.(2021)Dosovitskiy, Beyer, Kolesnikov, Weissenborn, Zhai, Unterthiner, Dehghani, Minderer, Heigold, Gelly, et~al.]{dosovitskiy2021image}
Alexey Dosovitskiy, Lucas Beyer, Alexander Kolesnikov, Dirk Weissenborn, Xiaohua Zhai, Thomas Unterthiner, Mostafa Dehghani, Matthias Minderer, Georg Heigold, Sylvain Gelly, et~al.
\newblock An image is worth 16x16 words: Transformers for image recognition at scale.
\newblock In \emph{Proceedings of International Conference on Learning Representations}, 2021.

\bibitem[Foerster et~al.(2016)Foerster, Assael, De~Freitas, and Whiteson]{foerster2016learning}
Jakob Foerster, Ioannis~Alexandros Assael, Nando De~Freitas, and Shimon Whiteson.
\newblock Learning to communicate with deep multi-agent reinforcement learning.
\newblock In \emph{Advances in Neural Information Processing Systems}, pp.\  2137--2145, 2016.

\bibitem[Foerster et~al.(2017)Foerster, Nardelli, Farquhar, Afouras, Torr, Kohli, and Whiteson]{foerster2017stabilising}
Jakob Foerster, Nantas Nardelli, Gregory Farquhar, Triantafyllos Afouras, Philip~HS Torr, Pushmeet Kohli, and Shimon Whiteson.
\newblock Stabilising experience replay for deep multi-agent reinforcement learning.
\newblock In \emph{Proceedings of International Conference on Machine Learning}, pp.\  1146--1155, 2017.

\bibitem[Fu et~al.(2022)Fu, Yu, Xu, Yang, and Wu]{fu2022revisiting}
Wei Fu, Chao Yu, Zelai Xu, Jiaqi Yang, and Yi~Wu.
\newblock Revisiting some common practices in cooperative multi-agent reinforcement learning.
\newblock In \emph{Proceedings of International Conference on Machine Learning}, pp.\  6863--6877, 2022.

\bibitem[Hartigan \& Wong(1979)Hartigan and Wong]{hartigan1979algorithm}
John~A Hartigan and Manchek~A Wong.
\newblock Algorithm as 136: A k-means clustering algorithm.
\newblock \emph{Journal of the Royal Statistical Society. Series C (Applied Statistics)}, 28\penalty0 (1):\penalty0 100--108, 1979.

\bibitem[He et~al.(2020)He, Fan, Wu, Xie, and Girshick]{he2020momentum}
Kaiming He, Haoqi Fan, Yuxin Wu, Saining Xie, and Ross Girshick.
\newblock Momentum contrast for unsupervised visual representation learning.
\newblock In \emph{Proceedings of IEEE/CVF Conference on Computer Vision and Pattern Recognition}, pp.\  9729--9738, 2020.

\bibitem[Hu et~al.(2022)Hu, Xie, Liang, and Chang]{hu2022policy}
Siyi Hu, Chuanlong Xie, Xiaodan Liang, and Xiaojun Chang.
\newblock Policy diagnosis via measuring role diversity in cooperative multi-agent {RL}.
\newblock In \emph{Proceedings of International Conference on Machine Learning}, pp.\  9041--9071, 2022.

\bibitem[Iqbal \& Sha(2019)Iqbal and Sha]{iqbal2019actor}
Shariq Iqbal and Fei Sha.
\newblock Actor-attention-critic for multi-agent reinforcement learning.
\newblock In \emph{Proceedings of International Conference on Machine Learning}, pp.\  2961--2970, 2019.

\bibitem[Iqbal et~al.(2022)Iqbal, Costales, and Sha]{iqbal2022alma}
Shariq Iqbal, Robby Costales, and Fei Sha.
\newblock {ALMA}: Hierarchical learning for composite multi-agent tasks.
\newblock In \emph{Advances in Neural Information Processing Systems}, pp.\  7155--7166, 2022.

\bibitem[Jiang \& Lu(2021)Jiang and Lu]{jiang2021emergence}
Jiechuan Jiang and Zongqing Lu.
\newblock The emergence of individuality.
\newblock In \emph{Proceedings of International Conference on Machine Learning}, pp.\  4992--5001, 2021.

\bibitem[Kurach et~al.(2020)Kurach, Raichuk, Sta{\'n}czyk, Zaj{\k{a}}c, Bachem, Espeholt, Riquelme, Vincent, Michalski, Bousquet, et~al.]{kurach2020google}
Karol Kurach, Anton Raichuk, Piotr Sta{\'n}czyk, Micha{\l} Zaj{\k{a}}c, Olivier Bachem, Lasse Espeholt, Carlos Riquelme, Damien Vincent, Marcin Michalski, Olivier Bousquet, et~al.
\newblock Google research football: A novel reinforcement learning environment.
\newblock In \emph{Proceedings of AAAI Conference on Artificial Intelligence}, pp.\  4501--4510, 2020.

\bibitem[Laskin et~al.(2020)Laskin, Srinivas, and Abbeel]{laskin2020curl}
Michael Laskin, Aravind Srinivas, and Pieter Abbeel.
\newblock {CURL}: Contrastive unsupervised representations for reinforcement learning.
\newblock In \emph{Proceedings of International Conference on Machine Learning}, pp.\  5639--5650, 2020.

\bibitem[Leibo et~al.(2017)Leibo, Zambaldi, Lanctot, Marecki, and Graepel]{leibo2017multi}
Joel~Z Leibo, Vinicius Zambaldi, Marc Lanctot, Janusz Marecki, and Thore Graepel.
\newblock Multi-agent reinforcement learning in sequential social dilemmas.
\newblock In \emph{Proceedings of International Conference on Autonomous Agents and Multi-Agent Systems}, pp.\  464--473, 2017.

\bibitem[Lhaksmana et~al.(2018)Lhaksmana, Murakami, and Ishida]{lhaksmana2018role}
Kemas~M Lhaksmana, Yohei Murakami, and Toru Ishida.
\newblock Role-based modeling for designing agent behavior in self-organizing multi-agent systems.
\newblock \emph{International Journal of Software Engineering and Knowledge Engineering}, 28\penalty0 (01):\penalty0 79--96, 2018.

\bibitem[Li et~al.(2021)Li, Wang, Wu, Zhao, Yang, and Zhang]{li2021celebrating}
Chenghao Li, Tonghan Wang, Chengjie Wu, Qianchuan Zhao, Jun Yang, and Chongjie Zhang.
\newblock Celebrating diversity in shared multi-agent reinforcement learning.
\newblock In \emph{Advances in Neural Information Processing Systems}, pp.\  3991--4002, 2021.

\bibitem[Liu et~al.(2021)Liu, Liu, Stone, Garg, Zhu, and Anandkumar]{liu2021coach}
Bo~Liu, Qiang Liu, Peter Stone, Animesh Garg, Yuke Zhu, and Anima Anandkumar.
\newblock Coach-player multi-agent reinforcement learning for dynamic team composition.
\newblock In \emph{Proceedings of International Conference on Machine Learning}, pp.\  6860--6870, 2021.

\bibitem[Liu et~al.(2023)Liu, Zhou, Song, Zheng, Chen, Zhu, Feng, and Song]{liu2023contrastive}
Shunyu Liu, Yihe Zhou, Jie Song, Tongya Zheng, Kaixuan Chen, Tongtian Zhu, Zunlei Feng, and Mingli Song.
\newblock Contrastive identity-aware learning for multi-agent value decomposition.
\newblock In \emph{Proceedings of AAAI Conference on Artificial Intelligence}, pp.\  11595--11603, 2023.

\bibitem[Liu et~al.(2022)Liu, Li, Xu, Dou, and Liu]{liu2022heterogeneous}
Yuntao Liu, Yuan Li, Xinhai Xu, Yong Dou, and Donghong Liu.
\newblock Heterogeneous skill learning for multi-agent tasks.
\newblock In \emph{Advances in Neural Information Processing Systems}, pp.\  37011--37023, 2022.

\bibitem[Lo \& Sengupta(2022)Lo and Sengupta]{lo2022learning}
Yat~Long Lo and Biswa Sengupta.
\newblock Learning to ground decentralized multi-agent communication with contrastive learning.
\newblock \emph{arXiv preprint arXiv:2203.03344}, 2022.

\bibitem[Lowe et~al.(2017)Lowe, Wu, Tamar, Harb, Pieter~Abbeel, and Mordatch]{lowe2017multi}
Ryan Lowe, Yi~I Wu, Aviv Tamar, Jean Harb, OpenAI Pieter~Abbeel, and Igor Mordatch.
\newblock Multi-agent actor-critic for mixed cooperative-competitive environments.
\newblock In \emph{Advances in Neural Information Processing Systems}, pp.\  6379--6390, 2017.

\bibitem[Mahajan et~al.(2019)Mahajan, Rashid, Samvelyan, and Whiteson]{mahajan2019maven}
Anuj Mahajan, Tabish Rashid, Mikayel Samvelyan, and Shimon Whiteson.
\newblock {MAVEN}: Multi-agent variational exploration.
\newblock In \emph{Advances in Neural Information Processing Systems}, pp.\  7611--7622, 2019.

\bibitem[Oliehoek \& Amato(2016)Oliehoek and Amato]{oliehoek2016concise}
Frans~A Oliehoek and Christopher Amato.
\newblock \emph{A Concise Introduction to Decentralized {POMDPs}}, volume~1.
\newblock Springer, 2016.

\bibitem[Oord et~al.(2018)Oord, Li, and Vinyals]{oord2018representation}
Aaron van~den Oord, Yazhe Li, and Oriol Vinyals.
\newblock Representation learning with contrastive predictive coding.
\newblock \emph{arXiv preprint arXiv:1807.03748}, 2018.

\bibitem[Phan et~al.(2023)Phan, Ritz, Altmann, Zorn, N{\"u}{\ss}lein, K{\"o}lle, Gabor, and Linnhoff-Popien]{phan2023attention}
Thomy Phan, Fabian Ritz, Philipp Altmann, Maximilian Zorn, Jonas N{\"u}{\ss}lein, Michael K{\"o}lle, Thomas Gabor, and Claudia Linnhoff-Popien.
\newblock Attention-based recurrence for multi-agent reinforcement learning under stochastic partial observability.
\newblock In \emph{Proceedings of International Conference on Machine Learning}, pp.\  27840--27853, 2023.

\bibitem[Rashid et~al.(2020)Rashid, Samvelyan, De~Witt, Farquhar, Foerster, and Whiteson]{rashid2020monotonic}
Tabish Rashid, Mikayel Samvelyan, Christian~Schroeder De~Witt, Gregory Farquhar, Jakob Foerster, and Shimon Whiteson.
\newblock Monotonic value function factorisation for deep multi-agent reinforcement learning.
\newblock \emph{Journal of Machine Learning Research}, 21\penalty0 (1):\penalty0 7234--7284, 2020.

\bibitem[Samvelyan et~al.(2019)Samvelyan, Rashid, De~Witt, Farquhar, Nardelli, Rudner, Hung, Torr, Foerster, and Whiteson]{samvelyan2019starcraft}
Mikayel Samvelyan, Tabish Rashid, Christian~Schroeder De~Witt, Gregory Farquhar, Nantas Nardelli, Tim~GJ Rudner, Chia-Man Hung, Philip~HS Torr, Jakob Foerster, and Shimon Whiteson.
\newblock The {StarCraft} multi-agent challenge.
\newblock \emph{arXiv preprint arXiv:1902.04043}, 2019.

\bibitem[Shao et~al.(2022)Shao, Lou, Zhang, Jiang, He, and Ji]{shao2022self}
Jianzhun Shao, Zhiqiang Lou, Hongchang Zhang, Yuhang Jiang, Shuncheng He, and Xiangyang Ji.
\newblock Self-organized group for cooperative multi-agent reinforcement learning.
\newblock In \emph{Advances in Neural Information Processing Systems}, pp.\  5711--5723, 2022.

\bibitem[Shao et~al.(2023)Shao, Zhang, Qu, Liu, He, Jiang, and Ji]{shao2023complementary}
Jianzhun Shao, Hongchang Zhang, Yun Qu, Chang Liu, Shuncheng He, Yuhang Jiang, and Xiangyang Ji.
\newblock Complementary attention for multi-agent reinforcement learning.
\newblock In \emph{Proceedings of International Conference on Machine Learning}, pp.\  30776--30793, 2023.

\bibitem[Sims et~al.(2008)Sims, Corkill, and Lesser]{sims2008automated}
Mark Sims, Daniel Corkill, and Victor Lesser.
\newblock Automated organization design for multi-agent systems.
\newblock \emph{Autonomous Agents and Multi-Agent Systems}, 16:\penalty0 151--185, 2008.

\bibitem[Son et~al.(2019)Son, Kim, Kang, Hostallero, and Yi]{son2019qtran}
Kyunghwan Son, Daewoo Kim, Wan~Ju Kang, David~Earl Hostallero, and Yung Yi.
\newblock {QTRAN}: Learning to factorize with transformation for cooperative multi-agent reinforcement learning.
\newblock In \emph{Proceedings of International Conference on Machine Learning}, pp.\  5887--5896, 2019.

\bibitem[Song et~al.(2023)Song, Feng, Zhou, and Li]{song2023ma2clmasked}
Haolin Song, Mingxiao Feng, Wengang Zhou, and Houqiang Li.
\newblock {MA2CL}: Masked attentive contrastive learning for multi-agent reinforcement learning.
\newblock In \emph{Proceedings of International Joint Conference on Artificial Intelligence}, pp.\  4226--4234, 2023.

\bibitem[Su et~al.(2022)Su, Lan, Wang, Yogatama, Kong, and Collier]{su2022contrastive}
Yixuan Su, Tian Lan, Yan Wang, Dani Yogatama, Lingpeng Kong, and Nigel Collier.
\newblock A contrastive framework for neural text generation.
\newblock In \emph{Advances in Neural Information Processing Systems}, pp.\  21548--21561, 2022.

\bibitem[Sukhbaatar et~al.(2016)Sukhbaatar, Szlam, and Fergus]{sukhbaatar2016learning}
Sainbayar Sukhbaatar, Arthur Szlam, and Rob Fergus.
\newblock Learning multiagent communication with backpropagation.
\newblock In \emph{Advances in Neural Information Processing Systems}, pp.\  2244--2252, 2016.

\bibitem[Sunehag et~al.(2018)Sunehag, Lever, Gruslys, Czarnecki, Zambaldi, Jaderberg, Lanctot, Sonnerat, Leibo, Tuyls, et~al.]{sunehag2018value}
Peter Sunehag, Guy Lever, Audrunas Gruslys, Wojciech~Marian Czarnecki, Vinicius Zambaldi, Max Jaderberg, Marc Lanctot, Nicolas Sonnerat, Joel~Z Leibo, Karl Tuyls, et~al.
\newblock Value-decomposition networks for cooperative multi-agent learning based on team reward.
\newblock In \emph{Proceedings of International Conference on Autonomous Agents and Multi-Agent Systems}, pp.\  2085--2087, 2018.

\bibitem[Vaswani et~al.(2017)Vaswani, Shazeer, Parmar, Uszkoreit, Jones, Gomez, Kaiser, and Polosukhin]{vaswani2017attention}
Ashish Vaswani, Noam Shazeer, Niki Parmar, Jakob Uszkoreit, Llion Jones, Aidan~N Gomez, {\L}ukasz Kaiser, and Illia Polosukhin.
\newblock Attention is all you need.
\newblock In \emph{Advances in Neural Information Processing Systems}, pp.\  5998--6008, 2017.

\bibitem[Vinyals et~al.(2019)Vinyals, Babuschkin, Czarnecki, Mathieu, Dudzik, Chung, Choi, Powell, Ewalds, Georgiev, et~al.]{vinyals2019grandmaster}
Oriol Vinyals, Igor Babuschkin, Wojciech~M Czarnecki, Micha{\"e}l Mathieu, Andrew Dudzik, Junyoung Chung, David~H Choi, Richard Powell, Timo Ewalds, Petko Georgiev, et~al.
\newblock Grandmaster level in {StarCraft II} using multi-agent reinforcement learning.
\newblock \emph{Nature}, 575\penalty0 (7782):\penalty0 350--354, 2019.

\bibitem[Wang et~al.(2023)Wang, Ye, and Lu]{wang2023more}
Jiangxing Wang, Deheng Ye, and Zongqing Lu.
\newblock More centralized training, still decentralized execution: Multi-agent conditional policy factorization.
\newblock In \emph{Proceedings of International Conference on Learning Representations}, 2023.

\bibitem[Wang et~al.(2020)Wang, Dong, Lesser, and Zhang]{wang2020roma}
Tonghan Wang, Heng Dong, Victor Lesser, and Chongjie Zhang.
\newblock {ROMA}: Multi-agent reinforcement learning with emergent roles.
\newblock In \emph{Proceedings of International Conference on Machine Learning}, pp.\  9876--9886, 2020.

\bibitem[Wang et~al.(2021)Wang, Gupta, Mahajan, Peng, Whiteson, and Zhang]{wang2021rode}
Tonghan Wang, Tarun Gupta, Anuj Mahajan, Bei Peng, Shimon Whiteson, and Chongjie Zhang.
\newblock {RODE}: Learning roles to decompose multi-agent tasks.
\newblock In \emph{Proceedings of International Conference on Learning Representations}, 2021.

\bibitem[Wen et~al.(2022)Wen, Kuba, Lin, Zhang, Wen, Wang, and Yang]{wen2022multi}
Muning Wen, Jakub Kuba, Runji Lin, Weinan Zhang, Ying Wen, Jun Wang, and Yaodong Yang.
\newblock Multi-agent reinforcement learning is a sequence modeling problem.
\newblock In \emph{Advances in Neural Information Processing Systems}, pp.\  16509--16521, 2022.

\bibitem[Xia et~al.(2023)Xia, Zhu, and Zhu]{xia2023dynamic}
Yu~Xia, Junwu Zhu, and Liucun Zhu.
\newblock Dynamic role discovery and assignment in multi-agent task decomposition.
\newblock \emph{Complex \& Intelligent Systems}, pp.\  1--12, 2023.

\bibitem[Yang et~al.(2020{\natexlab{a}})Yang, Borovikov, and Zha]{yang2020hierarchical}
Jiachen Yang, Igor Borovikov, and Hongyuan Zha.
\newblock Hierarchical cooperative multi-agent reinforcement learning with skill discovery.
\newblock In \emph{Proceedings of International Conference on Autonomous Agents and Multi-Agent Systems}, pp.\  1566--1574, 2020{\natexlab{a}}.

\bibitem[Yang et~al.(2022)Yang, Zhao, Hu, Zhou, Zhu, and Li]{yang2022ldsa}
Mingyu Yang, Jian Zhao, Xunhan Hu, Wengang Zhou, Jiangcheng Zhu, and Houqiang Li.
\newblock {LDSA}: Learning dynamic subtask assignment in cooperative multi-agent reinforcement learning.
\newblock In \emph{Advances in Neural Information Processing Systems}, pp.\  1698--1710, 2022.

\bibitem[Yang et~al.(2020{\natexlab{b}})Yang, Hao, Liao, Shao, Chen, Liu, and Tang]{yang2020qatten}
Yaodong Yang, Jianye Hao, Ben Liao, Kun Shao, Guangyong Chen, Wulong Liu, and Hongyao Tang.
\newblock Qatten: A general framework for cooperative multiagent reinforcement learning.
\newblock \emph{arXiv preprint arXiv:2002.03939}, 2020{\natexlab{b}}.

\bibitem[Yu et~al.(2022)Yu, Velu, Vinitsky, Gao, Wang, Bayen, and Wu]{yu2022surprising}
Chao Yu, Akash Velu, Eugene Vinitsky, Jiaxuan Gao, Yu~Wang, Alexandre Bayen, and Yi~Wu.
\newblock The surprising effectiveness of {PPO} in cooperative multi-agent games.
\newblock In \emph{Advances in Neural Information Processing Systems}, pp.\  24611--24624, 2022.

\bibitem[Yu et~al.(2023)Yu, Yang, Gao, Chen, Li, Liu, Xiang, Huang, Yang, Wu, and Wang]{yu2023asynchronous}
Chao Yu, Xinyi Yang, Jiaxuan Gao, Jiayu Chen, Yunfei Li, Jijia Liu, Yunfei Xiang, Ruixin Huang, Huazhong Yang, Yi~Wu, and Yu~Wang.
\newblock Asynchronous multi-agent reinforcement learning for efficient real-time multi-robot cooperative exploration.
\newblock In \emph{Proceedings of International Conference on Autonomous Agents and Multi-Agent Systems}, pp.\  1107--1115, 2023.

\bibitem[Yuan \& Lu(2022)Yuan and Lu]{yuan2022robust}
Haoqi Yuan and Zongqing Lu.
\newblock Robust task representations for offline meta-reinforcement learning via contrastive learning.
\newblock In \emph{Proceedings of International Conference on Machine Learning}, pp.\  25747--25759, 2022.

\bibitem[Yuan et~al.(2022)Yuan, Wang, Wang, Zhang, Chen, Guan, Zhang, Zhang, and Yu]{yuan2022multi}
Lei Yuan, Chenghe Wang, Jianhao Wang, Fuxiang Zhang, Feng Chen, Cong Guan, Zongzhang Zhang, Chongjie Zhang, and Yang Yu.
\newblock Multi-agent concentrative coordination with decentralized task representation.
\newblock In \emph{Proceedings of International Joint Conference on Artificial Intelligence}, pp.\  599--605, 2022.

\bibitem[Zang et~al.(2023)Zang, He, Li, Fu, Fu, Xing, and Cheng]{zang2023automatic}
Yifan Zang, Jinmin He, Kai Li, Haobo Fu, Qiang Fu, Junliang Xing, and Jian Cheng.
\newblock Automatic grouping for efficient cooperative multi-agent reinforcement learning.
\newblock In \emph{Advances in Neural Information Processing Systems}, volume~36, 2023.

\bibitem[Zeng et~al.(2023)Zeng, Peng, and Li]{zeng2023effective}
Xianghua Zeng, Hao Peng, and Angsheng Li.
\newblock Effective and stable role-based multi-agent collaboration by structural information principles.
\newblock In \emph{Proceedings of AAAI Conference on Artificial Intelligence}, pp.\  11772--11780, 2023.

\bibitem[Zhai et~al.(2023)Zhai, Peng, Su, and Tian]{zhai2023dynamic}
Yunpeng Zhai, Peixi Peng, Chen Su, and Yonghong Tian.
\newblock Dynamic belief for decentralized multi-agent cooperative learning.
\newblock In \emph{Proceedings of International Joint Conference on Artificial Intelligence}, pp.\  344--352, 2023.

\bibitem[Zhou et~al.(2020)Zhou, Luo, Villella, Yang, Rusu, Miao, Zhang, Alban, Fadakar, Chen, et~al.]{zhou2020smarts}
Ming Zhou, Jun Luo, Julian Villella, Yaodong Yang, David Rusu, Jiayu Miao, Weinan Zhang, Montgomery Alban, Iman Fadakar, Zheng Chen, et~al.
\newblock {SMARTS}: Scalable multi-agent reinforcement learning training school for autonomous driving.
\newblock In \emph{Proceedings of Conference on Robot Learning}, 2020.

\end{thebibliography}
\bibliographystyle{iclr2024_conference}

\clearpage
\appendix


\section*{Appendix A. Contrastive Role Representation Learning}
In this section, we give the proof of Theorem 1 in the text based on introducing a lemma as follows.

\begin{lemma}
    Given a role from the distribution $M\sim P(M)$, let $e=f_{\phi}(\sum_t (o^t,a^{t-1}))$ as the agent embedding generated by role $M$, and $z\sim f_{\theta}(z|e)$, where $\sum_t(o^t,a^{t-1})$ is the agent's local trajectory following a given policy. Then, we have
    \begin{equation}
        \frac{p(M|z)}{p(M)} = \mathbb{E}_{e}\left[ \frac{p(z|e)}{p(z)}\right].
    \end{equation}
\end{lemma}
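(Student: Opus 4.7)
The plan is to recognize that the setup implicitly defines a Markov chain $M \to e \to z$: the role $M$ generates a trajectory and hence the agent embedding $e = f_\phi(\sum_t(o^t, a^{t-1}))$, and the probabilistic encoder $z \sim f_\theta(z|e)$ depends only on $e$. This gives the conditional independence $p(z \mid e, M) = p(z \mid e)$, which is the workhorse of the argument. I would also clarify the expectation on the right-hand side: for the identity to type-check, $e$ must be sampled from $p(e \mid M)$, i.e., the embedding distribution induced by the given role $M$.

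First, I would apply Bayes' rule to rewrite the left-hand side as
\begin{equation*}
\frac{p(M \mid z)}{p(M)} = \frac{p(z \mid M)}{p(z)}.
\end{equation*}
Next, I would marginalize over $e$ using the law of total probability, writing $p(z \mid M) = \int p(z \mid e, M)\, p(e \mid M)\, de$, and then invoke the Markov property $p(z \mid e, M) = p(z \mid e)$ to obtain $p(z \mid M) = \int p(z \mid e)\, p(e \mid M)\, de = \mathbb{E}_{e \sim p(e \mid M)}[p(z \mid e)]$.

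Finally, dividing through by $p(z)$ (which does not depend on $e$ and can be pulled inside the expectation) yields
\begin{equation*}
\frac{p(z \mid M)}{p(z)} = \mathbb{E}_{e \sim p(e \mid M)}\!\left[\frac{p(z \mid e)}{p(z)}\right],
\end{equation*}
which combined with the first step gives the claim.

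The proof is essentially a one-line manipulation once the Markov structure is made explicit, so there is no deep obstacle. The only real subtlety, and the step I would be most careful about, is justifying the conditional independence $p(z \mid e, M) = p(z \mid e)$ from the definition $z \sim f_\theta(z \mid e)$ and the fact that $e$ is a deterministic function of the trajectory that itself screens off $M$. I would also state the measure under which the expectation is taken, since writing $\mathbb{E}_e[\cdot]$ without qualification (as in the lemma statement) is slightly ambiguous and the identity fails if $e$ is drawn from the marginal $p(e)$ instead of $p(e \mid M)$.
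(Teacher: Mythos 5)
Your proof is correct and follows essentially the same route as the paper's: Bayes' rule, marginalization over $e$ with the implicit Markov property $p(z\mid e, M)=p(z\mid e)$, and identification of the integral as an expectation under $p(e\mid M)$. Your explicit justification of the conditional independence and of the sampling measure for $e$ only makes precise what the paper leaves implicit.
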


\begin{proof}
\begin{equation}
\begin{aligned}
\frac{p(M|z)}{p(M)} 
& = \frac{p(z|M)}{p(z)} \\
& = \int_{e}\frac{p(e|M)p(z|e)}{p(z)}\mathrm{d}e \\
& = \mathbb{E}_{e}\left[ \frac{p(z|e)}{p(z)} \right].
\end{aligned}
\end{equation}
The proof is completed.
\end{proof}

\setcounter{theorem}{0}
\begin{theorem}
    Let $\mathcal{M}$ denote a set of roles following the role distribution $P(M)$, and $|\mathcal{M}|\!=\!K$.
    $M\!\in\! \mathcal{M}$ is a given role.
    Let $e\!=\!f_{\phi}(\sum_t (o^t,a^{t-1}))$, $z\!\sim\! f_{\theta}(z|e)$, and $h(e,z)\!=\!\frac{p(z|e)}{p(z)}$, where $\sum_t(o^t,a^{t-1})$ is the agent's local trajectory following a given policy.
    For any role $M^*\!\in\!\mathcal{M}$, let $e^*$ denote the agent embedding generated by the role $M^*$, then we have
    \begin{equation}
        I(z; M) \ge \log K +  \mathbb{E}_{\mathcal{M},z,e}\left[\log\frac{h(e,z)}{\sum_{M^*\in\mathcal{M}}h(e^*,z)}\right]. \tag{2}
    \end{equation}
\end{theorem}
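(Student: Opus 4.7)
The plan is to adapt the standard InfoNCE lower bound of van den Oord et al.\ (2018) to the role-based setting, using Lemma~1 as the crucial step that connects the quantity $I(z; M)$ to the density ratio $h(e, z) = p(z|e)/p(z)$ on which the contrastive loss operates.

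First, I would rewrite $I(z; M)$ directly using Lemma~1:
\begin{equation*}
I(z; M) \;=\; \mathbb{E}_{z, M}\!\left[\log \frac{p(M|z)}{p(M)}\right] \;=\; \mathbb{E}_{z, M}\!\left[\log \mathbb{E}_{e \sim p(e|M)}[h(e, z)]\right].
\end{equation*}
This is the natural starting point: the role-level mutual information equals the log of an $e$-averaged density ratio, which is exactly the object the contrastive objective manipulates (up to sampling).

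Second, I would establish the key normalizer identity for the negative samples. By a calculation directly analogous to the one in Lemma~1, for any fixed $z$ and for $M^* \sim P(M)$ with $e^* \sim p(e|M^*)$,
\begin{equation*}
\mathbb{E}_{M^*, e^*}[h(e^*, z)] \;=\; \sum_{M^*} P(M^*) \frac{p(z|M^*)}{p(z)} \;=\; 1.
\end{equation*}
Hence the sum $\sum_{M^* \in \mathcal{M}} h(e^*, z)$ has expectation $K$ and concentrates around $K$ for large $K$.

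Third, I would invoke the standard InfoNCE-style manipulation with the critic chosen to be $h$ itself. The derivation proceeds by (i) expanding $\log\!\bigl(\sum h(e^*,z)/h(e,z)\bigr) = \log\!\bigl(1 + \sum_{M^*\ne M} h(e^*,z)/h(e,z)\bigr)$, (ii) using the elementary bound $\log(1+y)\ge\log y$ for $y>0$, and (iii) invoking the normalizer identity from step two to replace the negative sum by its expected value $K-1$ (with a correction that vanishes as $K\to\infty$). This produces the $\log K$ offset; combining it with the step-one identity $\log \mathbb{E}_{e|M}[h(e,z)] = \log\!\tfrac{p(M|z)}{p(M)}$ — which upgrades the resulting bound from $I(e;z)$ to $I(z;M)$ via Jensen on the outer $e|M$ expectation — yields the stated inequality.

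The main obstacle is the consistent handling of the direction of Jensen's inequality. Because $\log$ is concave, $\log\mathbb{E}[\cdot]\ge\mathbb{E}[\log\cdot]$, and naïve applications repeatedly push bounds in the wrong direction; the proof must route Jensen through Lemma~1 on the positive side and rely on the elementary $\log(1+y)\ge\log y$ on the negative-sum side. A secondary subtlety is that the InfoNCE construction most naturally lower-bounds the embedding-level $I(e;z)$, and it is precisely Lemma~1 that allows us to re-express this as a valid lower bound on the role-level $I(z;M)$ required by the statement; correspondingly, the bound is tight only as $K\to\infty$, which is why the $\log K$ offset appears explicitly and the gap narrows with larger negative-sample sets.
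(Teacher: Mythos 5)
Your proposal follows essentially the same route as the paper's proof in Appendix~A: Lemma~1 supplies the identity $p(M|z)/p(M)=\mathbb{E}_e[h(e,z)]$, the normalizer expectation $\mathbb{E}_{M^*,e^*}[h(e^*,z)]=1$ justifies replacing the negative sum by $K-1$, and Jensen's inequality on the $e|M$ expectation lifts the bound from the embedding level up to $I(z;M)$. The only cosmetic difference is how the leading $1$ in the denominator is absorbed --- you use $\log(1+y)\ge\log y$, which lands on $\log(K-1)$, while the paper bounds $1+a(K-1)\ge aK$ to obtain exactly $\log K$; both steps sit at the same heuristic level of rigor as the original InfoNCE derivation, so this does not constitute a substantive departure.
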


\begin{proof}
Using Lemma A.1 and Jensen's inequality, we have

\begin{equation}
\begin{aligned}
\mathbb{E}_{\mathcal{M},z,e}\left[\log\frac{h(e,z)}{\sum_{M^*\in\mathcal{M}}h(e^*,z)}\right] 
&= \mathbb{E}_{\mathcal{M},z,e}\left[\log\frac{\frac{p(z|e)}{p(z)}}{\frac{p(z|e)}{p(z)} + \sum_{M^*\in\mathcal{M}\backslash M}\frac{p(z|e^*)}{p(z)}}\right]  \\
&= \mathbb{E}_{\mathcal{M},z,e}\left[ -\log\left( 1+\frac{p(z)}{p(z|e)}\sum_{M^*\in\mathcal{M}\backslash M}\frac{p(z|e^*)}{p(z)} \right) \right] \\
&\approx \mathbb{E}_{\mathcal{M},z,e}\left[ -\log\left( 1+\frac{p(z)}{p(z|e)}(K-1)\mathbb{E}_{M^*\in\mathcal{M}\backslash M}\left[\frac{p(z|e^*)}{p(z)}\right] \right) \right] \\
&= \mathbb{E}_{\mathcal{M},z,e}\left[ -\log\left( 1+\frac{p(z)}{p(z|e)}(K-1) \right) \right] \\
&= \mathbb{E}_{\mathcal{M},z,e}\left[ \log\left( \frac{1}{1+\frac{p(z)}{p(z|e)}(K-1)} \right) \right] \\
&\le \mathbb{E}_{\mathcal{M},z,e}\left[ \log\left( \frac{1}{\frac{p(z)}{p(z|e)}K} \right) \right] \\
&\le \mathbb{E}_{\mathcal{M},z}\left[ \log \mathbb{E}_{e}\left[ \frac{p(z|e)}{p(z)} \right] \right] - \log K \\
&= \mathbb{E}_{\mathcal{M},z}\left[ \log \frac{p(M|z)}{p(M)} \right] - \log K \\
&= I(z; M) - \log K.
\end{aligned}
\end{equation}
Thus, we complete the proof.
\end{proof}

\section*{Appendix B. Implementation Details and Extended Experiments of QMIX-Based ACORM}
Based on the implementations in Section 2, we summarize the brief procedure of ACORM based on QMIX in Algorithm~\ref{alg:acorm_qmix_code}.

\begin{algorithm}[!h]
\caption{ACORM based on QMIX}
\label{alg:acorm_qmix_code}
\KwIn{$\phi$: agent's trajectory encoder \newline 
$\theta$: role encoder \newline 
$K$: number of clusters \newline 
$T_{cl}$: time interval for updating contrastive loss \newline 
$n$: number of agents \newline 
$\mathcal{B}$: replay buffer \newline 
$T$: time horizon of a learning episode
} 

Initialize all network parameters \\
Initialize the replay buffer $\mathcal{B}$ for storing agent trajectories
     
\For{$\text{episode}=1,2,...$}{
Initialize history agent embedding $e_i^0$, and action vector $a_i^0$ for each agent \\

\For{$t=1,2,...,T$}{
Obtain each agent’s partial observation the $\{o_i^t\}_{i=1}^n$ and global state $\bm{s}^t$ \\

\For{$\text{agent}~i=1,2,...,n$}{
Calculate the agent embedding $e_i^t=f_{\phi}(o_i^t, a^{t-1}_i, e^{t-1}_i)$ \\
Calculate the role representation $z^t_i = f_{\theta}(e^t_i)$ \\
Select the local action $a_i^t$ according to individual Q-function $Q_i(e_i, a_i^t)$
}

Execute joint action $\bm{a}^t=[a_1^t,a_2^t,...,a_n^t]^\top$, and obtain global reward $r^t$
}

Store the trajectory to $\mathcal{B}$ \\
Sample a batch of trajectories from $\mathcal{B}$ \\

\uIf {$\text{episode} \bmod T_{cl} ==0$}{
Partition agent embeddings $\{e_i^t\}_{i=1}^n$ into $K$ clusters $\{C_j\}_{j=1}^K$ using K-means \\

\For{$\text{agent}~i=1,2,...,n$}{
Construct positive keys $\{z_{i'}\}_{i'\in C_j}$ and negative keys $\{z_{i^*}\}_{i^*\notin C_j}$ for query $z_i, i\!\in\! C_j$
}
Update contrastive learning loss according to Eq.~(\ref{nce_loss2}) \\
Update momentum role encoder according to Eq.~(\ref{soft_update}) \\
}
Calculate attention output $\bm{\tau}_{\text{mha}}$ via prompting the global state to attend to role representations $\{z_i\}_{i=1}^n$ in Eqs.~(\ref{att_i})-(\ref{att_all}) \\
Concatenate $\bm{\tau}_{\text{mha}}$ with state embedding $\bm{\tau}$ to form the input to the mixing network \\
Update the parameters of individual Q-network and the mixing network \\
}     
\end{algorithm}

\clearpage
In this paper, we use simple network structures for the trajectory encoder, the role encoder, and the attention mechanism. 
Specifically, the trajectory encoder contains a fully-connected multi-layer perceptron (MLP) and a GRU network with ReLU as the activation function, and encodes agent's trajectory into a 128-dimensional embedding vector.
The role encoder is a fully-connected MLP that transforms the 128-dimensional agent embedding into a 64-dimensional role representation.
The setting of the mixing network is kept as the same as that of QMIX~\citep{rashid2020monotonic}, where the architecture contains two 32-dimensional hidden layers with ReLU activation.
Table~\ref{network_config_acorm_qmix} shows the details of network structures.

For all tested algorithms, we use the Adam optimizer with a learning rate of 6e-4. 
For exploration, we use the $\epsilon$-greedy strategy with $\epsilon$ annealed linearly from $1.0$ to $0.02$ over $80k$ time steps and kept constant for the rest of the training. 
Every time an entire episode from online interaction is collected and stored in the buffer, the Q-networks are updated using a batch of $32$ episodes sampled from the replay buffer with a capacity of $5000$ state transitions.
The target Q-network is updated using a soft update strategy with momentum coefficient $0.005$.
The contrastive learning loss is jointly trained every $100$ steps of the Q-network updates.
The decentralized policy is evaluated every $5k$ update steps with $32$ episodes generated. 
For all domains, the number of clusters in ACORM is set to $K=3$. 
Appendix~D provides an analysis on this hyperparameter, and the experiments show that the performance of ACORM is not significantly affected by the value of $K$.
The details of hyperparameters can be found in Table~\ref{para_acorm_qmix}.

\begin{table}[h] \setlength{\tabcolsep}{5mm}
\caption{The network configurations used for ACORM based on QMIX.}
\label{network_config_acorm_qmix}
\begin{center}
\renewcommand{\arraystretch}{1.3}
\begin{tabular}{lrlr}
\hline 
\toprule[0.5pt]
\multicolumn{1}{l}{\bf Network Configurations}  &\multicolumn{1}{l}{\bf Value} &\multicolumn{1}{l}{\bf Network Configurations}  & \multicolumn{1}{l}{\bf Value} \\
\hline 
role representation dim     &64     &hypernetwork hidden dim        &32     \\
agent embedding dim         &128    &hypernetwork layers num        &2      \\
state embedding dim         &64     &type of optimizer              &Adam   \\
attention output dim        &64     &activation function            &ReLU   \\
attention head num          &4      &add last action                &True   \\
attention embedding dim     &128    &                               &       \\
\hline 
\toprule[0.5pt]
\end{tabular}
\end{center}
\end{table}

\begin{table}[h] \setlength{\tabcolsep}{5mm}
\caption{Hyperparameters used for ACORM based on QMIX.}
\label{para_acorm_qmix}
\begin{center}
\renewcommand{\arraystretch}{1.3}
\begin{tabular}{lrlr}
\hline 
\toprule[0.5pt]
\multicolumn{1}{l}{\bf Hyperparameter}  &\multicolumn{1}{l}{\bf Value} &\multicolumn{1}{l}{\bf Hyperparameter}  &\multicolumn{1}{l}{\bf Value}   \\
\hline 
buffer size                         &5000   &start epsilon $\epsilon_s$     &1.0     \\
batch size                          &32     &finish epsilon $\epsilon_f$    &0.02    \\
learning rate                       &$6\times 10^{-4}$   &$\epsilon$ decay steps         &80000   \\
use learning rate decay             &True   &evaluate interval              &5000    \\
contrastive learning rate          &$8\times 10^{-4}$    &evaluate times                 &32    \\
momentum coefficient $\beta$       &0.005   &target update interval         &200   \\
\textcolor{black}{update contrastive loss interval $T_{cl}$}   &100     &discount factor $\gamma$       &0.99  \\
cluster num                 &3      && \\
\hline 
\toprule[0.5pt]
\end{tabular}
\end{center}
\end{table}

\clearpage
\textcolor{black}{Fig.~\ref{performance_fig_extended} presents the extended performance of ACORM on 12 SMAC maps.
Obviously, the observations and conclusions from the extended performance are kept consistent with those in Sec. 3.1 of the main paper.}

\begin{figure}[t]
\begin{center}
\includegraphics[width=0.98\textwidth]{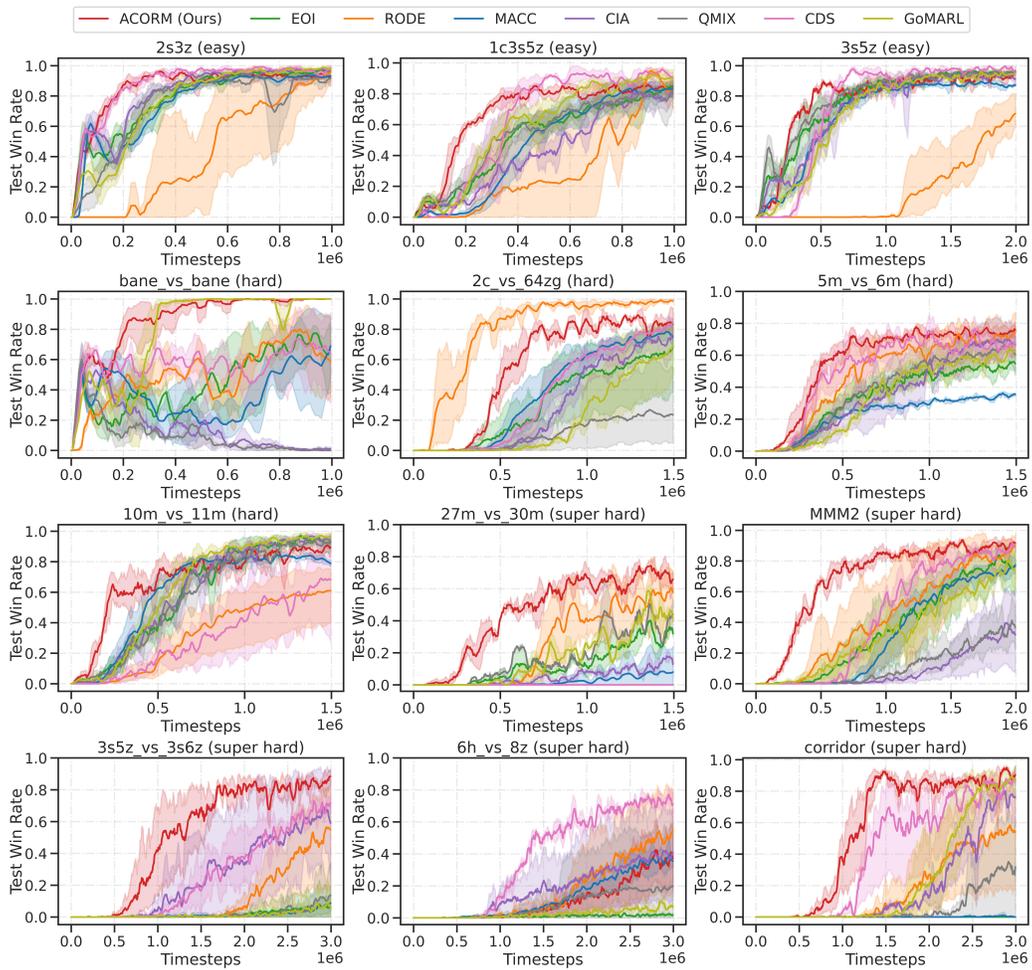}
\end{center}
\caption{\textcolor{black}{Extended performance comparison between ACORM and baselines on 12 SMAC maps.}}
\label{performance_fig_extended}
\end{figure}

\clearpage
\section*{Appendix C. Extended Experiments on Google Research Football}

\textcolor{black}{
In addition to SMAC environments, we also benchmark our approach on three challenging Google research football (GRF) offensive scenarios as 
\begin{itemize}[itemsep=0.25em, leftmargin=1.5em]
\vspace{-0.5em}
    \item \texttt{academy\_3\_vs\_1\_with\_keeper}: Three of our players try to score from the edge of the box, one on each side, and the other at the center. Initially, the player at the center has the ball and is facing the defender. There is an opponent keeper.
    \item \texttt{academy\_counterattack\_hard}: $4$ versus $2$ counter-attack with keeper; all the remaining players of both teams run back towards the ball.
    \item \texttt{academy\_run\_to\_score\_with\_keeper}: Our player starts in the middle of the field with the ball, and needs to score against a keeper. Five opponent players chase ours from behind.
\end{itemize}
}

\textcolor{black}{
In GRF tasks, agents need to coordinate timing and positions for organizing offense to seize fleeting opportunities, and only scoring leads to rewards. 
In our experiments, we control left-side players except the goalkeeper. 
The right-side players are rule-based bots controlled by the game engine. Agents have a discrete action space of $19$, including moving in eight directions, sliding, shooting, and passing. 
The observation contains the positions and moving directions of the ego-agent, other agents, and the ball. 
The $z$-coordinate of the ball is also included.
Environmental reward only occurs at the end of the game. They will get $+100$ if they win, else get $-1$.
}

\textcolor{black}{Fig.~\ref{performance_fig_grf} presents the performance comparison between ACORM and baselines on three challenging GRF scenarios.
It can be observed that QMIX obtains poor performance, since the tasks in GRF are more challenging than in the SMAC benchmark.
In contrast, ACORM gains a significantly improved increase in the test win rate, especially in the first 1M training timesteps, which successfully demonstrates the effectiveness of our method evaluated on GRF benchmarks.
Together with evaluations on SMAC domains, the same conclusion can still be drawn that ACORM outperforms baseline methods by a larger margin on harder tasks that demand a significantly higher degree of behavior diversity and coordination.
In summary, experimental results on extended GRF environments are generally consistent with those on the SMAC benchmark.
.}

\textcolor{black}{
Due to the very limited time for rebuttal revision, we only compare ACORM to QMIX and CDS, as other baselines (RODE, EOI, MACC, CIA) are not evaluated on GRF in their original papers.
Also, we just show the performance within 2M training steps.
We hope that the extended experimental evaluation could demonstrate adequate persuasiveness of our method.
We are rushing to conduct the evaluation of baseline methods on GRF scenarios with more training timesteps, and trying to update them before the rebuttal deadline.
}

\begin{figure}[h]
\begin{center}
\includegraphics[width=0.98\textwidth]{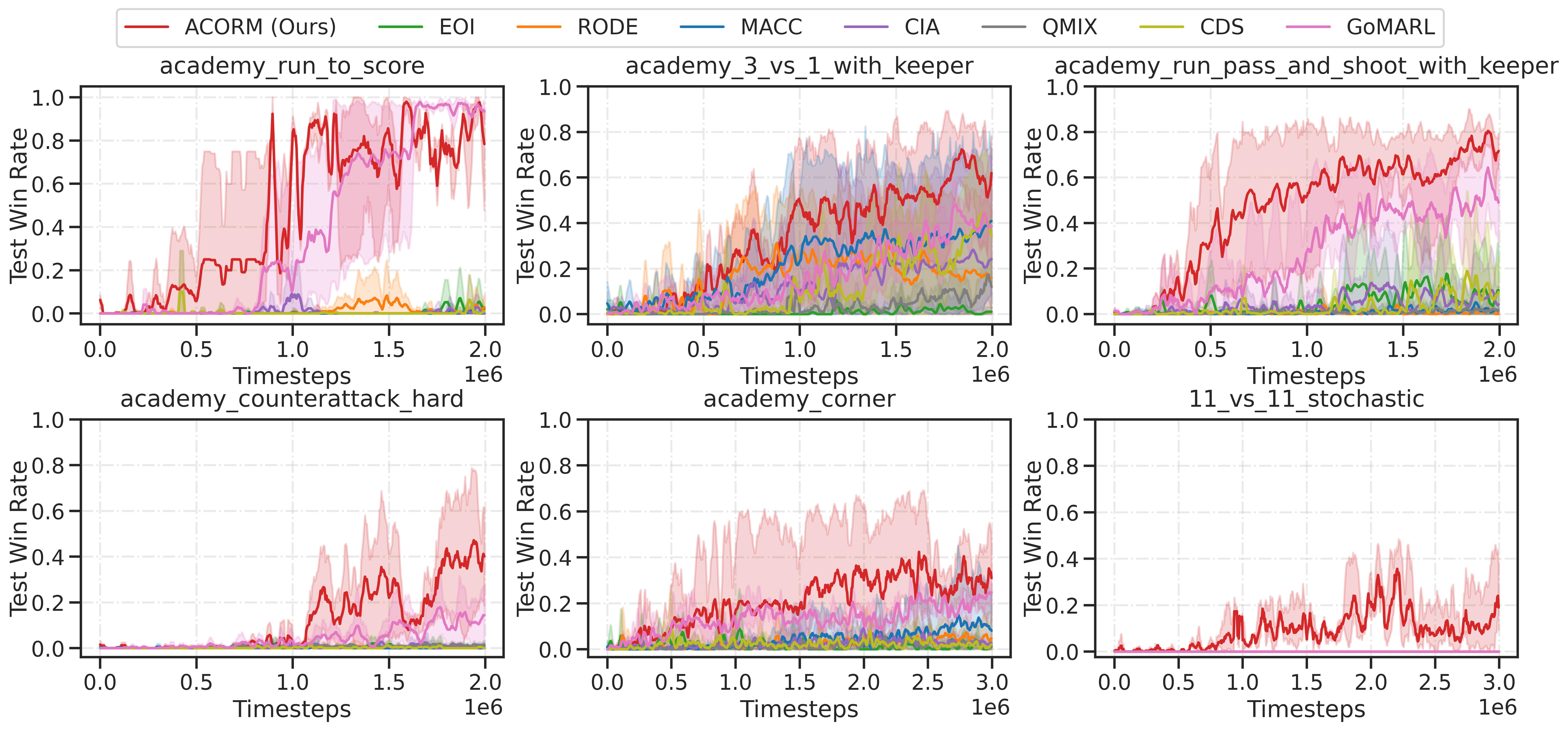}
\end{center}
\caption{\textcolor{black}{Performance comparison between ACORM and baselines on three GRF scenarios.}}
\label{performance_fig_grf}
\end{figure}

\clearpage 
\section*{Appendix D. ACORM based on MAPPO}
In addition, we realize ACORM on top of the MAPPO algorithm, as show in Fig.~\ref{acorm_mappo_framework}.
Most of the network structure is kept the same as QMIX-based ACORM, including the agent embedding, the role encoder, and the attention mechanism.
To align with MAPPO that uses an actor-critic architecture, we input both the agent's observation and the augmented global state $\tilde{\bm{s}}$ (obtained from the attention mechanism in Fig.~\ref{acorm_mappo_framework}(e)) into the critic, as shown in Fig.~\ref{acorm_mappo_framework}(a).
Tables~\ref{network_config_acorm_mappo} and \ref{para_acorm_mappo} present the detailed network structure and experimental hyperparameters, respectively.

\begin{figure}[ht]
\centering\includegraphics[width=0.98\textwidth]{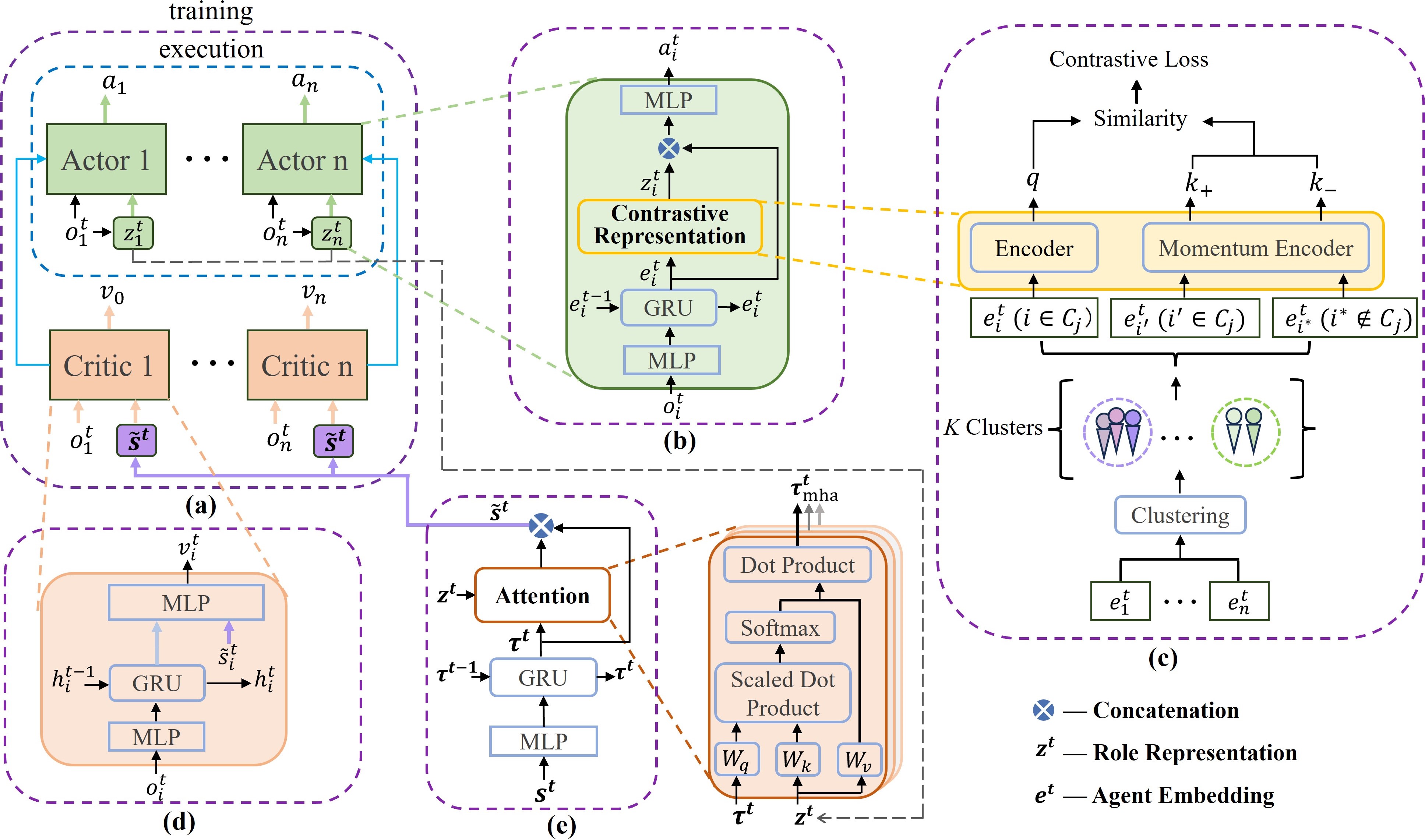}
\caption{The ACORM framework based on MAPPO. (a) The overall architecture. 
(b) The shared actor network structure for each agent, where the role representation is extracted from agent's trajectory. 
(c) The detail of learning role representations via contrasting learning.
(d) The shared critic network structure for each agent.
(e) The attention module that incorporates learned role representations into value decomposition.
}  
\label{acorm_mappo_framework}
\end{figure}

\begin{table}[h]\setlength{\tabcolsep}{5mm}
\caption{The network configurations used for ACORM based on MAPPO.}
\label{network_config_acorm_mappo}

\begin{center}
\renewcommand{\arraystretch}{1.3}
\begin{tabular}{lrlr}
\hline 
\toprule[0.5pt]
\multicolumn{1}{l}{\bf Network Configurations}  &\multicolumn{1}{l}{\bf Value}
&\multicolumn{1}{l}{\bf Network Configurations}  &\multicolumn{1}{l}{\bf Value} \\
\hline 
role representation dim                 &64           &attention output dim                    &64\\
agent embedding dim                     &128          &attention head num                      &4\\
state embedding dim                     &64           &attention embedding dim                 &128\\
critic RNN hidden dim                   &64           &type of optimizer                       &Adam\\
add agent ID                            &False        &activation                              &ReLU\\
\hline 
\toprule[0.5pt]
\end{tabular}
\end{center}
\end{table}

\begin{table}[h] \setlength{\tabcolsep}{2mm}
\caption{Hyperparameters used for ACORM based on MAPPO.}
\label{para_acorm_mappo}
\begin{center}
\renewcommand{\arraystretch}{1.3}
\begin{tabular}{lrlr}
\hline 
\toprule[0.5pt]
\multicolumn{1}{l}{\bf Hyperparameter}  &\multicolumn{1}{l}{\bf Value} &
\multicolumn{1}{l}{\bf Hyperparameter}  &\multicolumn{1}{l}{\bf Value} \\
\hline 
batch size                              &32             &entropy coefficient                      &0.02\\
mini batch size                         &32             &cluster num                             &3\\
actor learning rate                     &$6\times 10^{-4}$      &discount factor $\gamma$                &0.99\\
critic learning rate                    &$8\times 10^{-4}$      &momentum coefficient $\beta$             &0.005\\
contrastive learning rate               &$8\times 10^{-4}$      &evaluate interval                       &5000\\
\textcolor{black}{update contrastive loss interval $T_{cl}$}        &32             &evaluate times                          &32\\
clip                                    &0.2            &use advantage normalization             &True\\
GAE lambda $\lambda$                     &0.95           &use learning rate decay                 &False\\
K epochs                                &5              \\
\hline 
\toprule[0.5pt]
\end{tabular}
\end{center}
\end{table}

\begin{figure}[h]
\begin{center}
\includegraphics[width=0.98\textwidth]{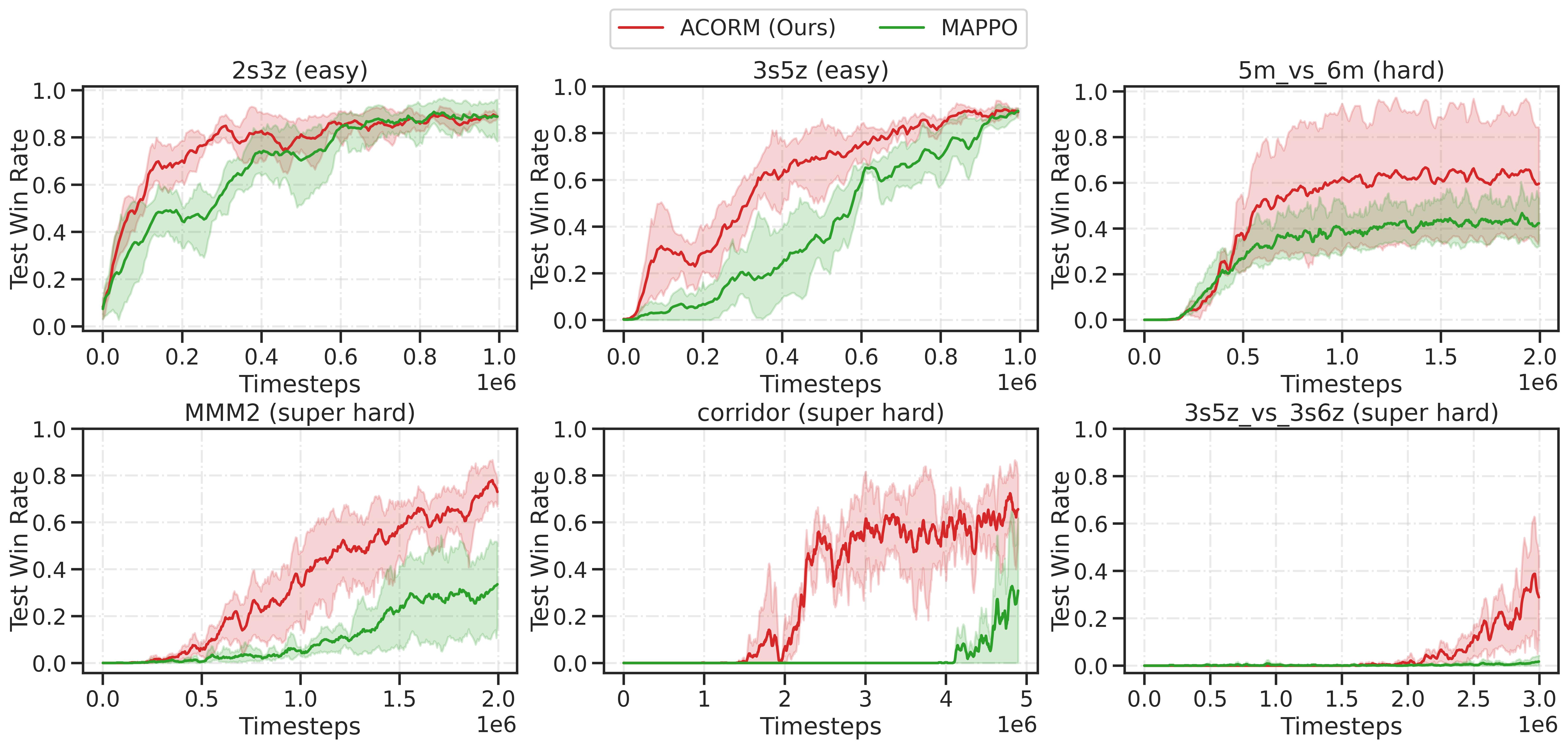}
\end{center}
\caption{
\textcolor{black}{Performance comparison between MAPPO-based ACORM and the MAPPO baseline on representative maps, including two easy levels (\texttt{2s3z}, \texttt{3s5z}), one hard level (\texttt{5m\_vs\_6m}), and three super hard levels (\texttt{MMM2}, \texttt{corridor}, \texttt{3s5z\_vs\_3s6z}).}
}
\label{fig:per_mappo}
\end{figure}

\vspace{10em}
\textcolor{black}{Figure~\ref{fig:per_mappo} presents the performance of MAPPO-based ACORM on six representative tasks: \texttt{2s3z}, \texttt{3s5z}, \texttt{5m\_vs\_6m}, \texttt{MMM2}, \texttt{corridor}, and \texttt{3s5z\_vs\_3s6z}.
It can be observed that ACORM achieves a significant performance improvement over MAPPO.
Akin to the QMIX-based version, ACORM outperforms MAPPO by the largest margin on super hard maps that demand a significantly higher degree of behavior diversity and coordination.
Again, experimental results demonstrate ACORM's superiority of learning efficiency in complex multi-agent domains.
}

\clearpage 
\section*{Appendix E. Hyperparameter Analysis on the Number of Clusters}
We test the influence of the number of clusters $K$ on ACROM's performance.
Fig.~\ref{fig:K} shows the performance of ACORM with varying values of $K\!=\!2,3,4,5$. 
Generally speaking, ACORM obtains similar performance across different values of $K$.
It demonstrates that ACORM achieves good learning stability and robustness as the performance is insensitive to the pre-defined number of role clusters.
An outlier case is observed in the \texttt{5m\_vs\_6m} map where the ACORM's performance drops a little when $K\!=\!5$.
This is likely because there are only 5 agents in \texttt{5m\_vs\_6m}.
When $K\!=\!5$, each agent represents a distinct role cluster.
It forces the strategies of each agent to diverge, which might not be conductive to realize effective team composition across agents.

\begin{figure}[h]
\begin{center}
\includegraphics[width=0.98\textwidth]{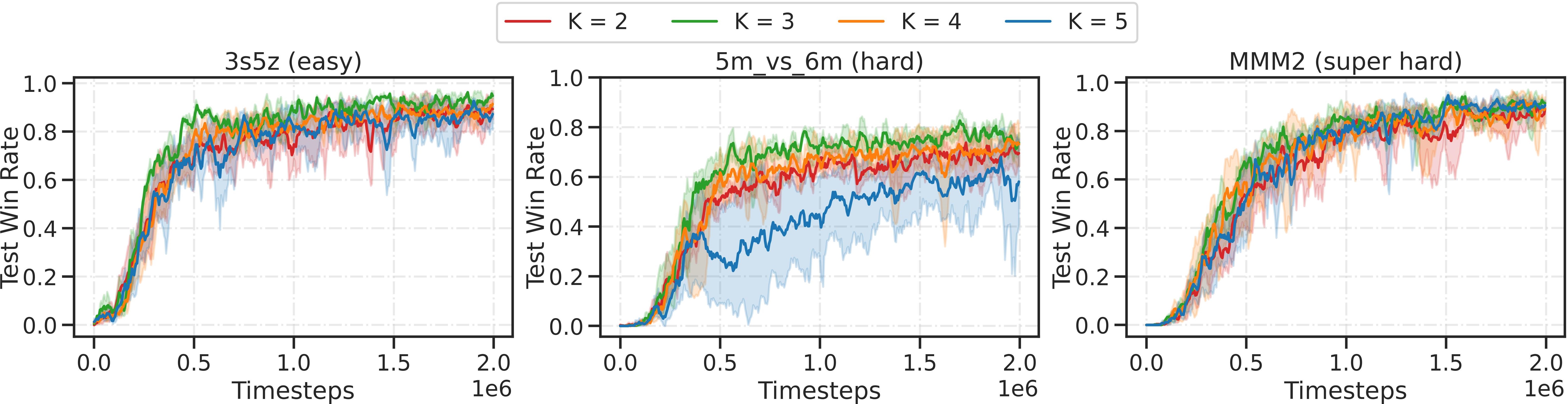}
\end{center}
\caption{Hyperparameter analysis on the number of clusters $K$ in negative pairs generation.}
\label{fig:K}
\end{figure}

To illustrate why ACORM performs well across different values of $K$, we show visualizations from different learned ACORM policies with $K\!=\!3,4,5$ in Fig.~\ref{fig:K_visual}. 
When the clustering granularity is coarse as $K\!=\!3$ in Fig.~\ref{fig:K_visual}(a), even within the same cluster, ACORM can still learn meaningful role representations with distinguishable patterns.
Agent embeddings of Marines $\{2,3,4,5,6,7,8\}$ are crowded together with limited discrimination.
Via contrastive learning, the obtained role representations exhibit an interesting self-organized structure where Marines $\{2,5,6\}$ and $\{3,4,7,8\}$ implicitly form two distinctive sub-groups.
It can be observed from the rendering scene that Marines $\{2,5,6\}$ and  $\{3,4,7,8\}$ are all at an attacking stage, while the former sub-group is in lower health than the latter.
On the other hand, with a fine granularity of $K\!=\!5$ in Fig.~\ref{fig:K_visual}(c), the contrastive learning module transforms clustered agent embeddings into more discriminative role representations.
For example, while Marines ${2,3}, {5,6,8}$, and ${4,7}$ form three clusters, their role representations are still closer to each other and farther from Marauders $\{0,1\}$ and Medivac $\{9\}$, since they are the same type of agents with similar behavior patterns.
In summary, it again demonstrates that ACORM learns meaningful role representations and achieves effective and robust team composition.

\begin{figure}[h]
\begin{center}
\includegraphics[width=0.98\textwidth]{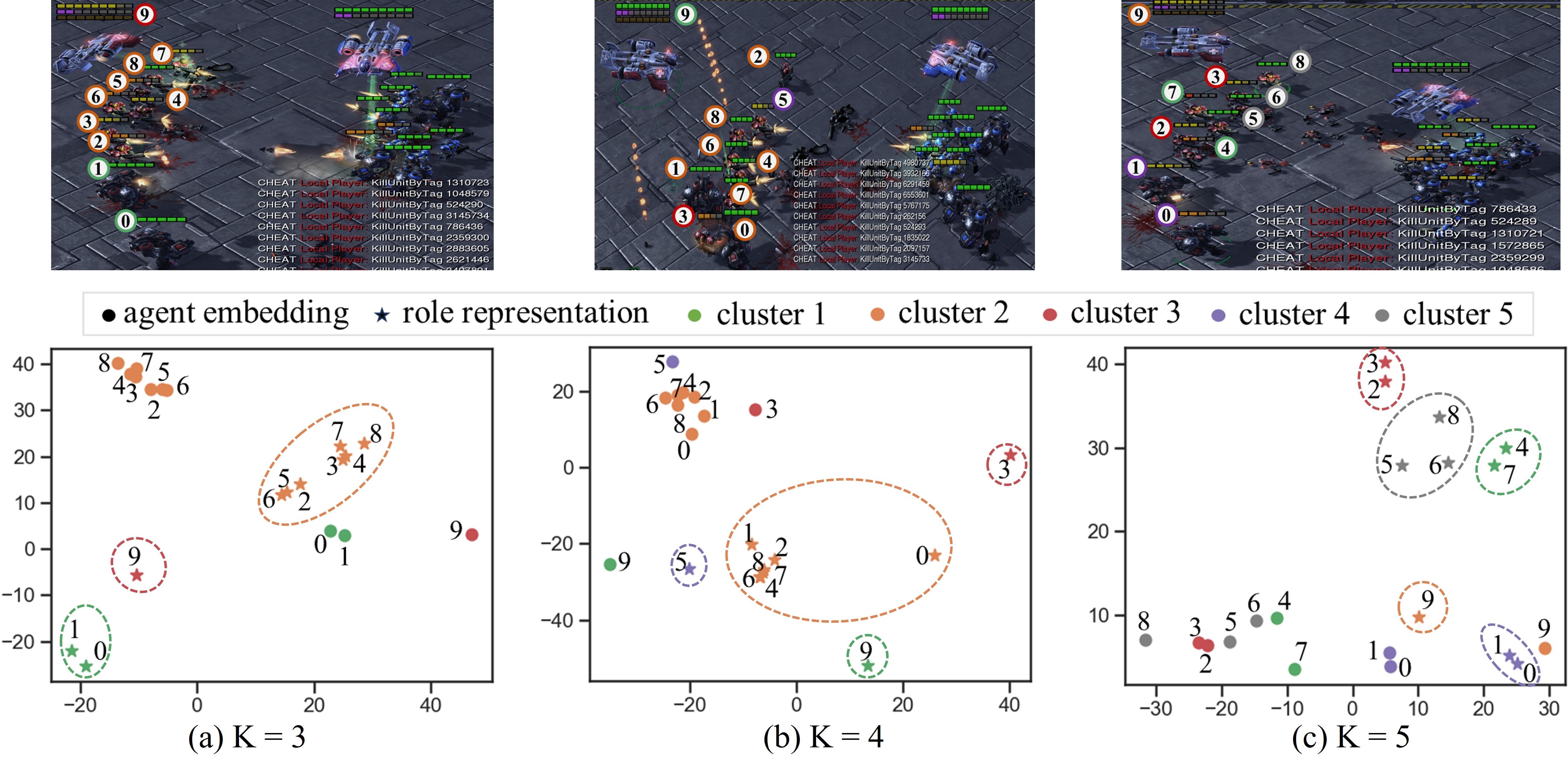}
\end{center}
\caption{
Visualization with different number of clusters $K$ in negative pairs generation.
}
\label{fig:K_visual}
\end{figure}

\clearpage
\section*{Appendix F. More Ablation Results}
Figure~\ref{fig:full_ablation_results} presents the full ablation results of ACORM on all six representative maps.
It again demonstrates the significance of both the contrastive learning module and the attention mechanism for ACORM's performance.
A noteworthy point is that both ACORM\_w/o\_CL and ACORM\_w/o\_MHA gain remarkable performance improvement by the largest margin on super hard maps, which further validate ACORM's advantages of promoting diversified behaviors and skillful coordination in complex multi-agent tasks.
Another interesting observation is that when omitting the contrastive learning module, there is a notable increase in the variance of learning curves. It can be interpreted as an evidence that contrastive learning helps training more robust role representations and enhances learning stability.

\begin{figure}[h]
\begin{center}
\includegraphics[width=0.98\textwidth]{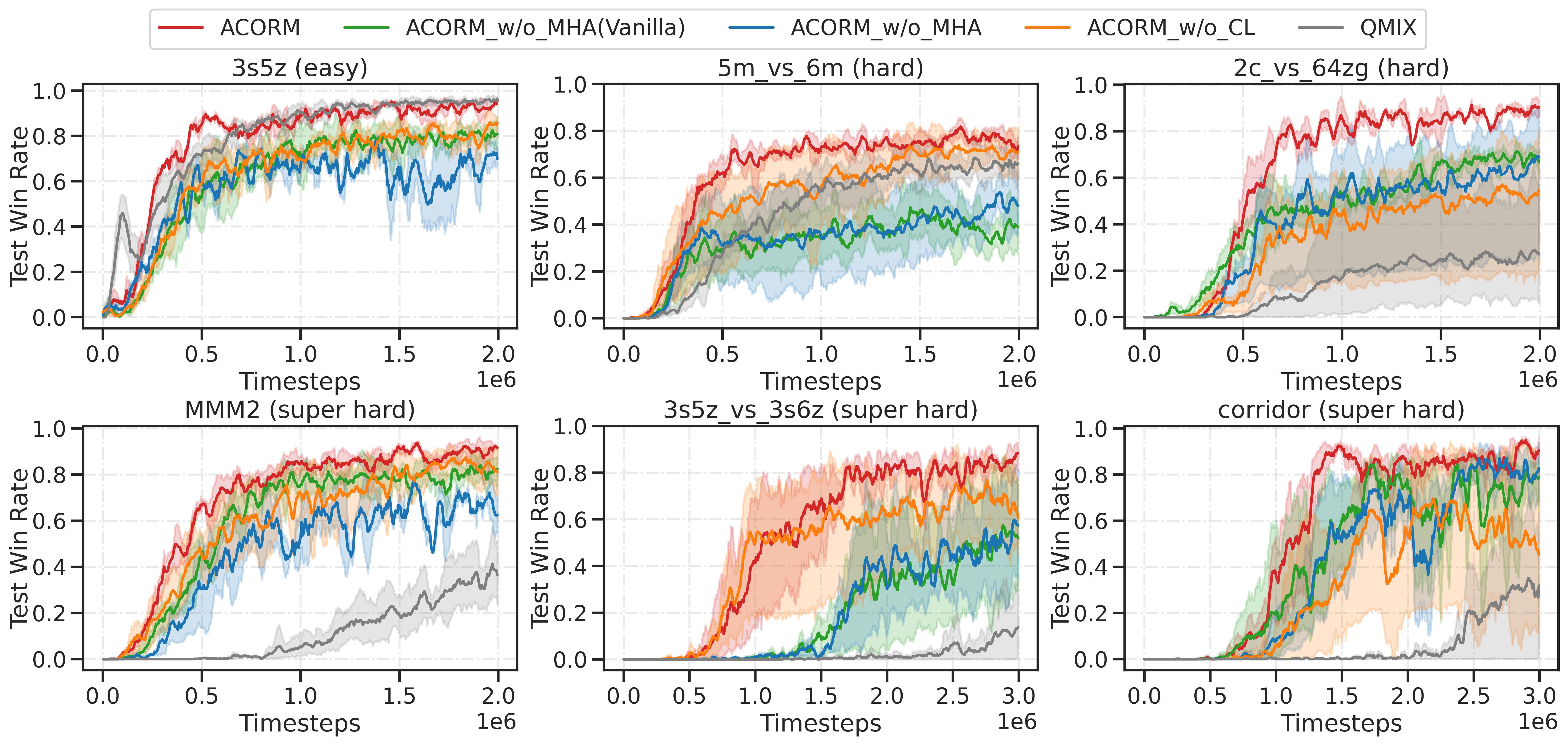}
\end{center}
\caption{
\textcolor{black}{Full ablation results on ACORM. 
ACORM\_w/o\_CL means removing contrastive learning, ACORM\_w/o\_MHA represents excluding attention, ACORM\_w/o\_MHA (Vanilla) represents excluding attention and state encoding, and QMIX corresponds to removing all components.}
}
\label{fig:full_ablation_results}
\end{figure}

\clearpage
\section*{Appendix G. More Insights on Contrastive Representation Learning}


Based on Section 3.2, here we provide more insights on the effectiveness of learned role representations.
Indeed, one reason for ACORM's significant improvement on learning efficiency comes from its capability of facilitating implicit knowledge transfer across similar agents throughout the entire learning process.
For example, Marines $\{2,3,4,5,6,7\}$ form a group at $t=1$, Marauders $\{0,1\}$ and Marines $\{2,4,6,7,8\}$ form a group at $t=12$, and Marines $\{2,3,4,6,7\}$ form a group at $t=40$.
It can be observed from the rendering scenes that these three groups are all responsible for attacking enemies.
At different time steps, agents in the attacking group can share similar role representations to promote knowledge transfer, even if they belong to heterogeneous agent types.
This implicit transfer across agents and across timesteps can significantly increase the exploration efficiency of agents. 

Another highlight of ACORM is the promotion of behavior heterogeneity, even if agents have the same innate characteristics.
For example, while Marines $\{2,3,4,5,6,7,8\}$ belong to the same agent type, they are distributed to different groups with heterogeneous roles as: 1) at $t\!=\!12$, Marines $\{2,4,6,7,8\}$ with role of attacking, and Marines $\{3\}$ and $\{3\}$ with role of the wounded; and 2) at $t\!=\!40$, Marines $\{2,3,4,6,7\}$ with role of attacking, and Marines $\{5\}$ and $\{8\}$ with role of the dead.
In general, even though some information is lost during dimension reduction using t-SNE, it is evident that our role representation still manages to exhibit such remarkable results.

\begin{figure}[h]
\begin{center}
\includegraphics[width=0.98\textwidth]{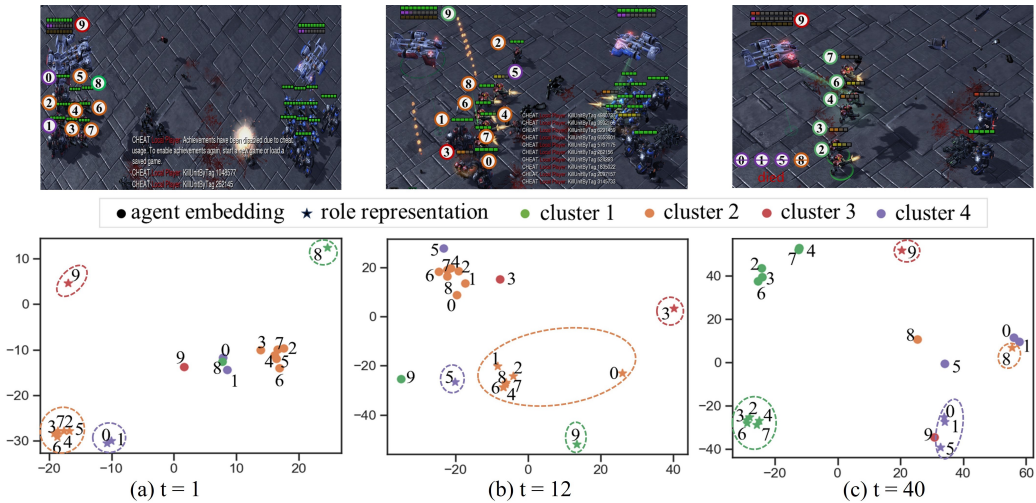}
\end{center}
\caption{
Example rendering scenes at three time steps in an evaluation trajectory generated by the trained ACORM policy on \texttt{MMM2}.
The upper row shows screenshots of combat scenarios that contain the information of positions, health points, shield points, states of ally and enemy units, etc.
The lower row visualizes the corresponding agent embeddings (denoted with bullets `$\bullet$') and role representations (denoted with stars `$\bm{\star}$') by projecting these vectors into 2D space via t-SNE for qualitative analysis, where agents within the same cluster are depicted using the same color.
}
\label{fig:visual_cl_appendix}
\end{figure}

\end{document}